\documentclass[journal,twoside,web]{ieeecolor}

\usepackage{generic}
\usepackage{textcomp}

\usepackage{amsmath} 
\usepackage{amssymb}  
\usepackage{graphicx}
\usepackage{epstopdf}
\usepackage{amsmath}
\usepackage{mathtools}
\usepackage{dsfont}
\usepackage{tikz}
\usepackage{siunitx}
\usepackage{hyperref}

\DeclareMathOperator*{\argmin}{argmin}
\DeclareMathOperator*{\argmax}{argmax}

\usepackage[bottom]{footmisc}

\usepackage{balance}    
\usepackage{amsthm}
\usepackage[ruled,vlined,linesnumbered]{algorithm2e}
\usepackage{xcolor}
\usepackage{tcolorbox}
\usepackage{algpseudocode}
\usepackage{dsfont}

\usepackage[noadjust]{cite} 
\usepackage{amsmath,stmaryrd,graphicx}
\usepackage{mathtools}

\DeclarePairedDelimiter\floor{\lfloor}{\rfloor}

\newtheoremstyle{boldStyle}
  {\topsep}
  {\topsep}
  {\itshape}
  {0pt}
  {\bfseries}
  {.}
  { }
  {\thmname{#1}\thmnumber{ #2}\thmnote{ (#3)}}

\newtheoremstyle{italicStyle}
  {\topsep}
  {\topsep}
  {}
  {0pt}
  {\bfseries}
  {.}
  { }
  {\thmname{#1}\thmnumber{ #2}\thmnote{ (#3)}}

\theoremstyle{boldStyle}
\newtheorem{theorem}{Theorem}

\newtheorem{proposition}{Proposition}

\theoremstyle{italicStyle}
\newtheorem{assumption}{Assumption}
\newtheorem{remark}{Remark}

\newcommand{\Gcurr}{\mathcal{C}_\textrm{curr}^k}
\newcommand{\Ggoal}{\mathcal{C}_\textrm{goal}^k}
\newcommand{\Gforc}{\mathcal{C}_\textrm{forc}^k}
\newcommand{\Xgoal}{\mathcal{X}^k_\textrm{goal}}
\newcommand{\Xcurr}{\mathcal{X}^k_\textrm{curr}}
\newcommand{\pgoal}{p_\textrm{goal}^k}

\newcommand{\GcurrMinus}{\mathcal{C}_\textrm{curr}^{k-1}}
\newcommand{\GgoalMinus}{\mathcal{C}_\textrm{goal}^{k-1}}

\newcommand{\pgoalMinus}{p_\textrm{goal}^{k-1}}
\newcommand{\XgoalMinus}{\mathcal{X}^{k-1}_\textrm{goal}}

\newcommand{\GcurrPlus}{\mathcal{C}_\textrm{curr}^{k+1}}
\newcommand{\GgoalPlus}{\mathcal{C}_\textrm{goal}^{k+1}}

\newcommand{\XgoalPlus}{\mathcal{X}^{k+1}_\textrm{goal}}
\newcommand{\XcurrPlus}{\mathcal{X}^{k+1}_\textrm{curr}}
\newcommand{\pgoalPlus}{p_\textrm{goal}^{k+1}}

\newcommand{\Zp}{\mathbb{Z}_{0+}}
\newcommand{\Rp}{\mathbb{R}_{0+}}
\newcommand{\be}{b}
\newcommand{\Be}{\mathcal{B}}

\def\sq{\mathbin{{\strut\rule{1.25ex}{1.25ex}}}}
\renewenvironment{proof}{{\textbf{Proof:}}}{\hfill$\sq$}

\definecolor{ugoColor}{rgb}{0.6,0.8,0.0}
\definecolor{ligthGray}{rgb}{0.95,0.95,0.95}

\makeatletter
\newcommand{\fixed@sra}{$\vrule height 2\fontdimen22\textfont2 width 0pt\shortrightarrow$}
\newcommand{\shortarrow}[1]{%
  \mathrel{\text{\rotatebox[origin=c]{\numexpr#1*45}{\fixed@sra}}}
}
\makeatother

\begin{document}


\title{Unified Multi-Rate Control: from Low-Level Actuation to High-Level Planning}

\author{
Ugo Rosolia,~\IEEEmembership{Member,~IEEE,} Andrew Singletary,~\IEEEmembership{Member,~IEEE,} and~Aaron D. Ames,~\IEEEmembership{Fellow,~IEEE}
\thanks{U. Rosolia, A. Singletary and A. D. Ames are with the AMBER lab at the California Institute of Technology, Pasadena,
CA, USA, e-mail: \texttt{\{urosolia, asinglet, ames\}@caltech.edu}. The authors would like to acknowledge the support by the National Science Foundation award \#1932091.}
}


\maketitle

\begin{abstract}
In this paper we present a hierarchical multi-rate control architecture for nonlinear autonomous systems operating in partially observable environments. 
Control objectives are expressed using syntactically co-safe Linear Temporal Logic (LTL) specifications and the nonlinear system is subject to state and input constraints. 
At the highest level of abstraction, we model the system-environment interaction using a discrete Mixed Observable Markov Decision Process (MOMDP), where the environment states are partially observed. The high-level control policy is used to update the constraint sets and cost function of a Model Predictive Controller (MPC) which plans a reference trajectory. Afterwards, the MPC planned trajectory is fed to a low-level high-frequency tracking controller, which leverages Control Barrier Functions (CBFs) to guarantee bounded tracking errors. 
Our strategy is based on model abstractions of increasing complexity and layers running at different frequencies.
We show that the proposed hierarchical multi-rate control architecture maximizes the probability of satisfying the high-level specifications while guaranteeing state and input constraint satisfaction. 
Finally, we tested the proposed strategy in simulations and experiments on examples inspired by the Mars exploration mission, where only partial environment observations are available.
\end{abstract}

\begin{IEEEkeywords}
partially observable, noisy observations, predictive control,  control barrier function, multi-rate control, hierarchical control.
\end{IEEEkeywords}

\IEEEpeerreviewmaketitle

\section{Introduction}

Control design for complex cyber-physical systems, which are described by continuous and discrete variables, is usually divided into different layers~\cite{wongpiromsarn2011tulip, Wongpiromsarn2012, wongpiromsarn2010receding, tabuada2006linear, alur2000discrete, haesaert2019temporal, haesaert2018temporal, kousik2018bridging, shao2020reachability, herbert2017fastrack, csomay2022multi}. 
Each layer is designed using model of increasing accuracy and complexity, which allow the controller to take high-level decisions--e.g., perform an overtaking maneuver--and to compute low-level commands--e.g., the input current to a motor.
High-level decisions and low-level control actions are computed at different frequencies and the interaction between layers should be taken into account to guarantee safety of the closed-loop system~\cite{Wongpiromsarn2012}.

In this work, we present a multi-rate hierarchical control scheme for  nonlinear systems operating in partially observable environments. Our architecture, which is composed by three layers running at different frequencies, guarantees constraint satisfaction and maximization of the closed-loop probability of satisfying the high-level specifications. At the lowest level, we leverage the continuous time nonlinear system model to guarantee a bounded tracking error. The mid-level planning layer computes a reference trajectory using a simplified prediction model and the low-level tracking error bounds. Finally, at the highest level of abstraction we model the system-environment interaction using Mixed Observable Markov Decision Processes (MOMDPs), which allows us to account for partial environment observations.

\begin{figure}[t!]
    \centering
	\includegraphics[trim=0mm 0mm 0mm -4mm, width= 1.0\columnwidth]{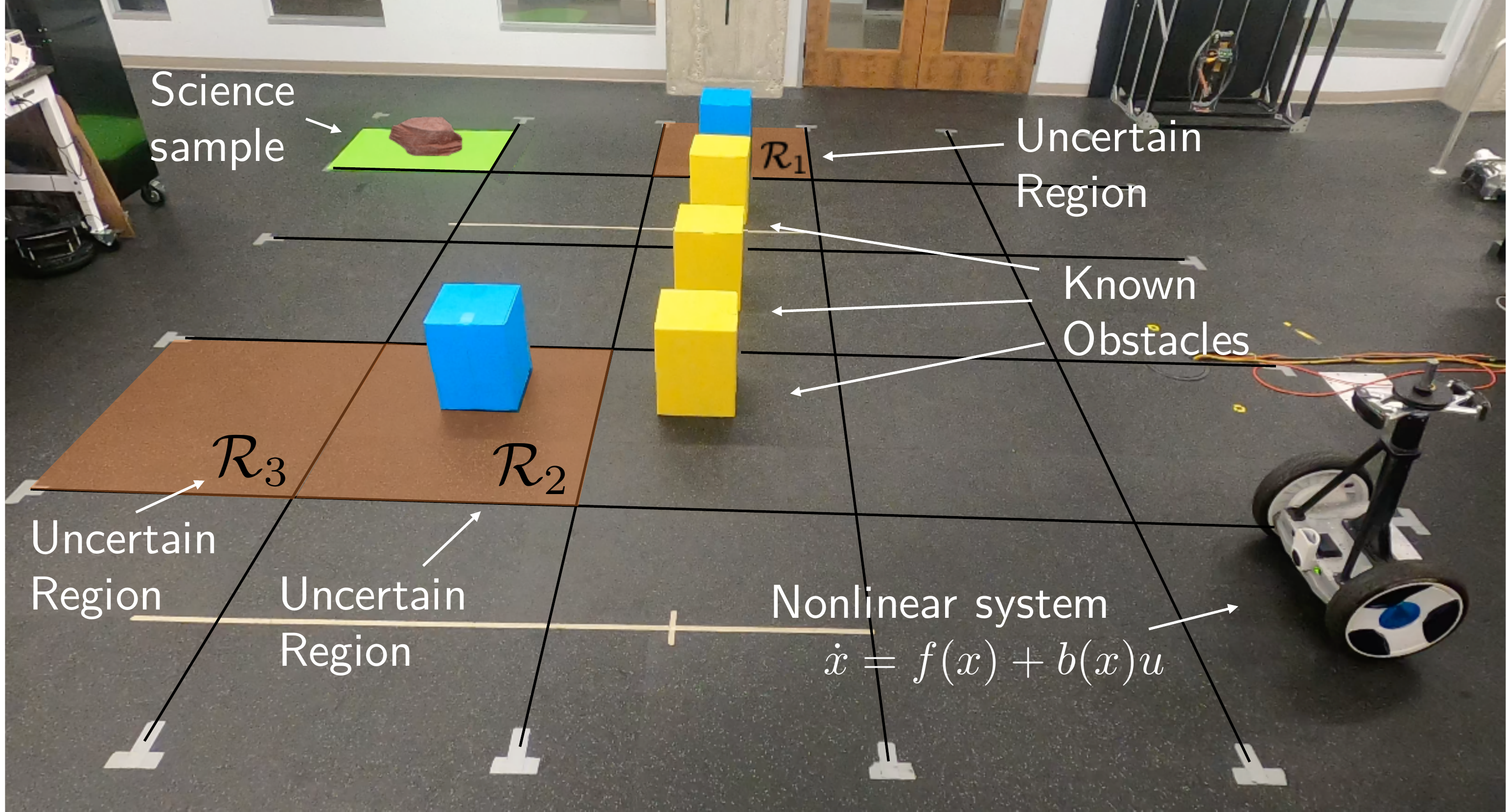}
    \centering
    \caption{This figure shows an environment composed of $25$ cells, $3$ obstacles (yellow and blue boxes) and $3$ uncertain regions (light brown). In this example the goal of the controller is to explore the state space in order to find a science sample.}
    \label{fig:envScheme}
\end{figure}

\begin{figure*}[h!]
    \includegraphics[trim= 0mm 0mm 0mm 0mm, clip, width=0.97\textwidth]{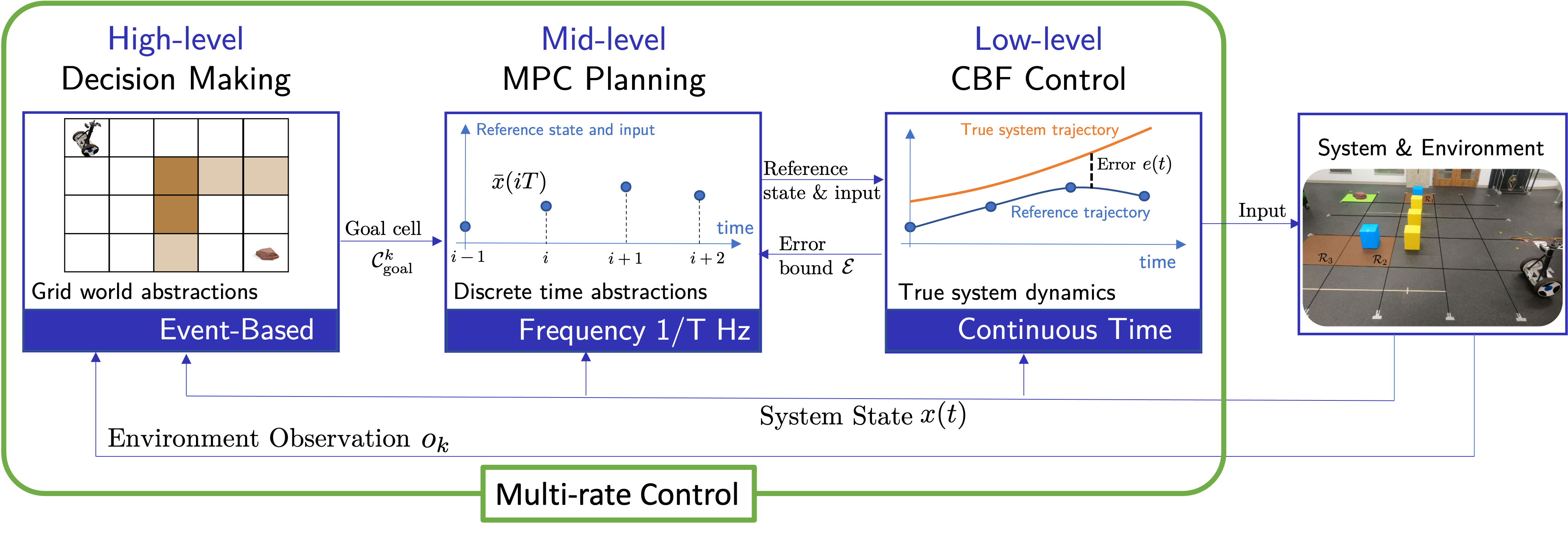}
  \caption{Multi-rate control architecture. The high-level decision maker leverages the system's state $x(t)$ and partial environment observations $o_k$ to compute a goal cell, the constraint set and the goal positions, which are fed to the mid-level MPC planner. The planner computes a reference trajectory given the tracking error bounds $\mathcal{E}$ from the low-level tracking controller. Finally at the lowest level, the control action is computed summing up the mid-level input $u_m(t)$ and the low-level input $u_l(t)$.}\label{fig:summary}
\end{figure*}

\subsection{Related Work}
Control policies for high-level decision making are usually synthesized using discrete model abstractions. The high-level control objectives are often expressed by Linear Temporal Logic (LTL) formulas~\cite{pnueli1977temporal}, as they are a formalism to express high-level system behaviors using logical and temporal operators~\cite{pnueli1977temporal}. Motion planning with LTL and syntactically co-safe LTL (scLTL) specifications has been widely studied in literature~\cite{loizou2004automatic, wongpiromsarn2010receding, fainekos2005hybrid, kloetzer2008fully, Wongpiromsarn2012, tabuada2006linear, haesaert2018temporal ,haesaert2019temporal, nilsson2018toward,bouton2020point,vasile2016control,wang2018bounded,ahmadi2020stochastic,kwiatkowska2011prism, dehnert2017storm}. For deterministic systems with finite-state spaces several approaches and toolboxes are available for synthesis~\cite{loizou2004automatic, fainekos2005hybrid, kloetzer2008fully, Wongpiromsarn2012, tabuada2006linear,wongpiromsarn2010receding}. 
When the system-environment interaction are uncertain, the high-level abstractions are described by discrete Markov Decision Processes (MDPs) and the high-level decision making problem can be solved exactly using dynamic programming, policy iteration, and linear programming strategies~\cite{altman1999constrained}. On the other hand, when the system dynamics are uncertain and only partial observations are available, the system-environment interaction can be modeled using discrete Partially Observable Markov Decision Processes (POMPDs). Computing a control policy in POMDPs settings is NP-hard~\cite{sondik1978optimal}, but approximate solutions can be computed using finite state controllers~\cite{PoupartB03} and performing point-based approximations~\cite{pineau2003point}.

Given a high-level decision, reachability-based techniques~\cite{wongpiromsarn2010receding,Wongpiromsarn2012} or simulation-based abstractions~\cite{tabuada2006linear,alur2000discrete} may be used to compute a goal set for the continuous time system, e.g., a subset of a lane where we would like to drive the vehicle when performing an overtaking maneuver. 
Therefore, the input to the system's actuators is computed solving mid-level planning and low-level control problems, which have been studied extensively in literature~\cite{gurriet2018towards, CBF, wang2017safety, wabersich2018linear, herbert2017fastrack,yin2019optimization, singh2018robust, singh2017robust, gao2014tube, kogel2015discrete, yu2013tube, gundana2021event,
chen2020automatic}. 
The planning problem is usually defined for a simplified model and the resulting reference trajectory is then tracked using low-level controllers, which leverage the nonlinear system dynamics. Tracking controllers may be synthesized using Hamilton-Jacobi (HJ) reachability analysis~\cite{herbert2017fastrack} or sum-of-squares programming~\cite{singh2018robust,yin2019optimization}.
Another strategy to solve mid-level planning and low-level control problems is to use nonlinear tube MPC~\cite{gao2014tube, kogel2015discrete, yu2013tube, singh2017robust,kohler2020computationally}, where the difference between the planned trajectory and the actual one is over approximated using Lyapunov based analysis or Lipschitz properties of the nonlinear dynamics.
When the planned trajectory is computed without taking into account tracking errors, safety can be guaranteed using filters which, given a desired mid-level command, compute a safe control action using CBFs~\cite{gurriet2018towards,CBF,wang2017safety}, feasibility of an MPC problem~\cite{wabersich2018linear}, or reachability analysis~\cite{shao2020reachability}. A different strategy that can be used to bridge the gap between high-level decision making and low-level control is to leverage CBFs~\cite{gundana2021event, chen2020automatic}. However, these strategies compute control actions without forecasting the evolution of the system's trajectory and they may result in sub-optimal behaviors. 

As discussed next, our approach leverages both MPC and CBF policies to compute low-level commands given high-level decisions. The forcast from the MPC planning layers is used to compute a feedforward term that allows us to mitigate the myopic nature of CFBs, which are used to guarantee safety at the continuous time layer~\cite{CBF}.

\subsection{Contribution}
Our contribution is threefold. First, we introduce a mid-level planner that leverages two MPC problems with time-varying constraint sets and cost functions. These time-varying components are given by the high-level decision maker and they can jeopardize the feasibility of standard MPC schemes, which are designed to steer the system to a time-invariant goal state. For instance, the safety guarantees from~\cite{multiRosolia}--\cite{mayne2005robust} are lost when the goal state and constraints are updated online during the execution of the control task. To overcome limitations of standard time-invariant approaches, we propose a contingency scheme where at each time step we solve at most two MPC problems. This strategy guarantees feasibility of the planner with time-varying components. In particular, in Algorithm~\ref{algo:multiRate} we introduce a contingency MPC problem that is defined by updating the time-varying components as a function of the latest planned optimal trajectory. 

Second, we show how to integrate a CLF-CBF tracking controller with a mid-level planner. We present a constraint tightening approach that accounts for the low-level tracking error and we demonstrate that the resulting multi-rate control architecture guarantee safety, when a local reachability assumption on the system dynamics is satisfied. Such reachability assumption, which is tailored to navigation problems, together with the proposed contingency scheme allows us to avoid the construction of finite state abstractions defined over the entire state space.
Furthermore compared to the constraint tightening from our previous work~\cite{multiRosolia}, the proposed constraint tightening builds upon ideas from the fixed-tube robust MPC strategy~\cite{mayne2005robust}, where the initial state of the planned trajectory is an optimization variable. For this reason, the proposed constraint tightening does not require the online computation of robust reachable sets to formulate the MPC problem and therefore it is computationally more efficient than the formulation from~\cite{multiRosolia}.

Third, we show how to model the system-environment interaction using Mixed Observable Markov Decision Processes (MOMDPs), where the system's state is fully observable and the environment's state is partially observable. We build upon~\cite{rosolia2021time} that introduced the synthesis process for systems with discrete state and action spaces, and we focus on the multi-layer hierarchical control design for systems with continuous states and actions. In particular, we show how to leverage high-level decisions from the MOMDP to construct the MPC time-varying components, and we demonstrate that the proposed multi-layer hierarchical control strategy maximizes the probability of satisfying the high-level specifications. Finally, we test our strategy on navigation tasks as the one shown in Figure~\ref{fig:envScheme}, where a Segway like-robot has to find science samples while navigating a partially observable environment. 


This paper is organized as follows. The background material is  discussed in Section~\ref{sec:prel}. Section~\ref{sec:probForm} describes the problem under study. First, we introduce the system and environment models, and afterwards the control design objectives. The hierarchical architecture is introduced in Section~\ref{sec:proposedArchitecture}, where we present the high-level decision maker, the mid-level MPC planer, and the low-level CBF controller. The closed-loop properties are discussed in Section~\ref{sec:properties}. Finally, we illustrate the effectiveness of the proposed strategy with high-fidelity simulations and hardware experiments.
 
\section{Preliminaries}\label{sec:prel}%

\noindent
\textbf{Notation: }
The Minkowski sum of two sets $\mathcal{X}\subset \mathbb{R}^{n_x}$ and $\mathcal{Y}\subset \mathbb{R}^{n_x}$ is denoted as $\mathcal{X}\oplus\mathcal{Y}$, and the  Pontryagin difference as $\mathcal{X}\ominus\mathcal{Y}$. $\mathcal{K}^e$ is the set of extended class-$\mathcal{K}^e$ functions $\beta$ which are strictly increasing and $\beta(0)=0$. For a set $\mathcal{A}\subset\mathbb{R}^{n_x}$ and a vector $x\in\mathbb{R}^{n_x}$, we denote the projection
\begin{equation*}
    \text{Proj}(x,\mathcal{A}) = \argmin_{d \in \mathcal{A}} ||x - d ||_2,
\end{equation*}
and the cardinality of the set $\mathcal{A}$ as $|\mathcal{A}|$. We define $\Zp = \{0, 1,2,\ldots\}$ and $\Rp = \{x\in\mathbb{R}^{n_x}|x\geq0\}$ which denote the set of positive integers and real numbers, respectively. Finally, given $t \in \Rp$ and $T \in \Zp$ we define $\floor{t/T}=\text{floor}(t/T)$.

\vspace{2pt}
\noindent\textbf{Specifications: } High-level objectives are expressed using syntactically co-safe Linear Temporal Logic (scLTL) specifications. For a set of atomic proposition $\mathcal{AP}$, an scLTL specification is defined as follows:
\begin{equation*}
    \psi : =  p ~ | ~ \neg p ~ | ~ \psi_1 \land \psi_2 ~ | ~ \psi_1 \lor \psi_2 ~ | ~ \psi_1 U \psi_2~ | ~ \bigcirc \psi ,
\end{equation*}
where the atomic proposition $p \in \mathcal{AP}$ and $\psi, \psi_1, \psi_2$ are scLTL formulas, which can be defined using the logic operators negation ($\neg$), conjunction ($\land$) and disjunction ($\lor$). Furthermore, scLTL formulas can be specified using the temporal operators until ($U$) and next ($\bigcirc$). Each atomic proposition $p$ is associated with a subset of the high-level state space $\mathcal{P}$ and a high-level state $\omega_k$ satisfies the proposition $p$ if $\omega_k \in \mathcal{P}$. Finally, satisfaction of a specification $\psi$ for the trajectory $\boldsymbol{\omega}_k = [\omega_k, \omega_{k+1}, \ldots]$, denoted by  
\begin{equation}\label{eq:specFormula}
    \boldsymbol{\omega}_k \models \psi,
\end{equation}
is recursively defined as follows: $i)$ $\boldsymbol{\omega}_k \models p \iff$ $\omega_k \in \mathcal{P}$, $ii)$ $\boldsymbol{\omega}_k \models \psi_1 \land \psi_2 \iff$ $(\boldsymbol{\omega}_k \models \psi_1)\land (\boldsymbol{\omega}_k \models \psi_1)$, $iii)$ $\boldsymbol{\omega}_k \models \psi_1 \lor \psi_2 \iff$ $(\boldsymbol{\omega}_k \models \psi_1)\lor (\boldsymbol{\omega}_k \models \psi_1)$, $iv)$ $\boldsymbol{\omega}_k \models \psi_1 U \psi_2 \iff$    $\boldsymbol{\omega}_l \models \psi_2$ and $\boldsymbol{\omega}_j \models \psi_2,~\forall j \in \{k, \ldots, l-1\}$, $v)$ $\boldsymbol{\omega}_k \models \bigcirc \psi \iff$ $\boldsymbol{\omega}_{k+1} \models \psi$. Please refer to~\cite[Chapter~3]{belta2017formal} for further details.

\section{Problem Formulation}\label{sec:probForm}
This section describes the problem formulation. First, we introduce the continuous system dynamics. Afterwards, we present the discrete environment model. Finally, we describe the synthesis goals and we summarize the overall control architecture from Figure~\ref{fig:summary}.

\vspace{2pt}
\noindent \textbf{System Model}:
As discussed in the introduction, our goal is to design a controller for nonlinear dynamical systems. In particular, we consider nonlinear control affine systems of the following form:
\begin{equation}\label{eq:sysModel}
    \dot x(t) = f\big( x(t) \big) + g\big( x(t) \big) u(t) ,
\end{equation}
where $f$ and $g$ are Lipschitz continuous, the input $u(t) \in \mathbb{R}^{n_u}$ and the state $x(t) = [p^\top(t), q^\top(t)]^\top
\in \mathbb{R}^{n_x}$
for the position vector $p(t) \in \mathbb{R}^{n_p}$ and the vector $q(t) \in \mathbb{R}^{n_q}$ collecting the remaining states.
Furthermore, the above system is subject to the following state and input constraints:
\begin{equation}\label{eq:lowLevelCnstr}
    u(t) \in \mathcal{U}, p(t_i) \in \mathcal{X}_p \text{ and } q(t_i) \in \mathcal{X}_q,
\end{equation}
for all $ t \in \Rp$ and $t_i = iT$ for all $i \in \Zp$. The time constant $T$ is specified by the user and, as it will be clear later on, it defines the frequency at which the controller updates the planned trajectory. In the above equation~\eqref{eq:lowLevelCnstr}, $\mathcal{X}_p$ represents the free space and $\mathcal{X}_q$ is a user-defined constraint set.

\begin{remark}
    We consider state constraints which are enforced pointwise in time to streamline the presentation. The proposed  control strategy can be extended to account for constraints which must hold for all time $t\in \Rp$. In this case, it is required to modify the low-level controller as discussed in~\cite{multiRosolia}.
\end{remark}

\vspace{2pt}
\noindent\textbf{Environment Model}:
We consider nonlinear dynamical systems operating in partially observable environments, which are partitioned into $\mathcal{C}_1, \ldots, \mathcal{C}_{c}$ cells as in the example from Figure~\ref{fig:envScheme}. We assume that the state of the system is perfectly observable, but we are given only partial observations about the environment state. Thus, at the highest level of abstraction, we model the interaction between the nonlinear system~\eqref{eq:sysModel} and the environment using a Mixed Observable Markov Decision Process (MOMDP). A MOMDP provides a sequential decision-making formalism for high-level planning under mixed full and partial observations~\cite{ong2010planning} and it is defined as a tuple $\left( \mathcal{S}, \mathcal{Z}, \mathcal{A},\mathcal{O}, T_s, T_z, O \right)$, where 
\begin{itemize}

	\item $\mathcal{S}=\{1,\ldots,|\mathcal{S}|\}$ is a set of fully observable states;
	
	\item $\mathcal{Z}=\{1,\ldots,|\mathcal{Z}|\}$ is a set of partially observable states;
    
    \item $\mathcal{A}=\{1,\ldots,|\mathcal{A}|\}$ is a set of actions;

	\item $\mathcal{O}=\{1, \ldots,|\mathcal{O}|\}$ is the set of observations for the partially observable state $z\in \mathcal{Z}$;

	\item The indicator function\footnote{We introduced the indicator function as it will be used later on to compute the belief vector update.} $T_s:\mathcal{S}\times \mathcal{Z} \times \mathcal{A}\times \mathcal{S}\rightarrow \{0, 1\}$ equals one if the system will transition to a state $s'$ given the action $a$ and current state $(s,z)$, i.e.,
    \begin{equation*}
        T_s(s, z, a, s')= \begin{cases}
        1 & \mbox{If } s'=f_s(s,z,a)\\
        0 & \mbox{Else}
        \end{cases},
    \end{equation*}
    where the high-level update function
    $f_s:\mathcal{S}\times\mathcal{Z}\times\mathcal{A}\rightarrow \mathcal{S}$.

    \item The function $T_z: \mathcal{S} \times\mathcal{Z}\times \mathcal{A}\times \mathcal{S}\times \mathcal{Z}\rightarrow [0,1]$ describes the probability of transitioning to a state $z'$ given the action $a$, the successor observable state $s'$, and the system's current state $(s,z)$, i.e.,
    \begin{equation*}
	\begin{aligned}
	    T_z(s&, z, a, s', z')\\
	    &:=\!P(z_{k+1}\!=\!z'|s_{k}\!=\!s, z_{k}\!=\!z,a_{k}\!=\!a, s_{k+1}\!=\!s');
	\end{aligned}
	\end{equation*}
	\item The function $O:\mathcal{S}\times \mathcal{Z} \times \mathcal{A} \times \mathcal{O} \rightarrow [0,1]$ describes the probability of observing the measurement $o \in \mathcal{O}$, given the current state of the system $(s',z') \in \mathcal{S} \times \mathcal{Z}$ and the action $a$ applied at the previous time step, i.e.,
    \begin{equation*}
	    O(s',z', a, o) :=  P(o_k=o|s_{k}=s', z_k= z',a_{k-1}=a);
	\end{equation*}
\end{itemize}

MOMDPs were introduced in~\cite{ong2010planning} to model systems where a subspace of the state space is perfectly observable\footnote{We introduced a special case of the MOMDP from~\cite{ong2010planning} where the transition function of the observable state is deterministic given both the current observable and unobservable states.}. 
In this work, the high-level observable state $s$ represents the location of the system, i.e., the grid cell containing the position vector $p(t)$ which is part of state $x(t) = [p^\top(t), q^\top(t)]^\top$ of the nonlinear system~\eqref{eq:sysModel}. On the other hand, the definition of the partially observable state $z$ depends on the application, and it describes how the environment may affect the evolution of the system. For example, it may be used to model external events (e.g., rain, wind, etc) that would affect the traversability of specific regions of the state space. The evolution of the environment state $z$ may be stochastic and, most importantly, it is not perfectly observable. Thus, the controller has to make decisions based on the belief about the environment. For example, when the objective is to reach a goal location before a deadline and only partial knowledge about the traversability of the terrain is given, the controller should follow a path that maximizes the probability of reaching the goal in time, given our belief about the environment.

More formally, control actions are computed based on the environment belief vector 
$b_{k} \in \mathcal{B}=\{ b \in \mathbb{R}^{|\mathcal{Z}|} : \sum_{z=1}^{|\mathcal{Z}|} b^{(z)} = 1\}$
representing the posterior probability that the partially observable state environment $z_k$ equals $z \in \mathcal{Z}$, i.e., $b_{k} = [b_k^{(1)},\ldots, b_{k}^{(|\mathcal{Z}|)}]$ with 
\begin{equation*}
    \be^{(z)}_k \coloneqq \mathbb{P}(z_k = z | \boldsymbol{o}_k,\boldsymbol{s}_k,\boldsymbol{a}_{k-1}),~\forall z \in \{1\ldots,|\mathcal{Z}|\}.
\end{equation*} 
In the above definition, at time $k$ the observation vector $\boldsymbol{o}_k=[o_0,\ldots, o_k]$, the observable state vector $ \boldsymbol{s}_k=[s_0,\ldots, s_k]$, and the action vector $\boldsymbol{a}_{k-1}=[a_0,\ldots, a_{k-1}]$. Notice that the evolution of the environment belief vector $\be_k$ is stochastic as it is a function of the noisy observation vector $\boldsymbol{o}_k = [o_0, \ldots, o_k]$. Therefore, the planned path that maximizes the probability of completing the task should be computed online at after collecting the observation $o_k$ about the environment's state and updating the belief vector $b_k$.


\vspace{2pt}
\noindent\textbf{Synthesis Objectives: } Given the system's state $x(t) \in \mathbb{R}^{n_x}$ and $k$  observations $\boldsymbol{o}_{k-1}=[o_0, \ldots, o_{k-1}] \in \mathcal{O}^{k}$ about the environment, our goal is to design a control policy
\begin{equation}\label{eq:policyObg}
    \pi:\mathbb{R}^{n_x} \times \mathcal{O}^{k} \rightarrow \mathcal{U},
\end{equation}
which maps the state $x(t)$ and the observation vector $\boldsymbol{o}_{k-1}$ to the continuous control action $u\in\mathcal{U}$.
Furthermore, the control policy~\eqref{eq:policyObg} should guarantee that state and input constraints~\eqref{eq:lowLevelCnstr} are satisfied and that the probability of satisfying the specification~\eqref{eq:specFormula} is maximized. Notice that standard control strategies for nonlinear systems can be used to guarantee constraint satisfaction~\cite{gao2014tube, kogel2015discrete, yu2013tube, singh2017robust,kohler2020computationally,gurriet2018towards,CBF,wang2017safety}. Furthermore, standard decision making methodologies for Partially Observable Markov Decision Processes (POMDPs) can be used to synthesize a control policy which maximizes the probability of satisfying the specification~\cite{ahmadi2020stochastic,bouton2020point,haesaert2019temporal,vasile2016control,haesaert2018temporal,wang2018bounded}. In this paper, we bridge the gap between the two communities and we propose a hierarchical control scheme for nonlinear systems operating in partially observable environments, which guarantees that state and input constraints are satisfied and that the probability of satisfying the specifications is maximized.

\vspace{2pt}
\noindent\textbf{Navigation Example: } 
Figure~\ref{fig:envScheme} shows our motivating example, where a Segway has to reach a goal cell while avoiding known obstacles and exploring uncertain regions, which may be traversable with some probability.
The Segway dynamics are nonlinear and the system is open-loop unstable, for this reason it is required a low-level high frequency controller that stabilizes the system during operations.
On the other hand, at the highest level of abstraction we model the system using the discrete state $s_k \in \mathcal{S}$, which denotes the grid cell containing the nonlinear system~\eqref{eq:sysModel}, and the environment state $z_k \in \mathcal{Z}$ representing the traversability of the uncertain regions $\mathcal{R}_1$, $\mathcal{R}_2$ and $\mathcal{R}_3$. 
In this navigation example, where the traversability of $n_o$ regions is unknown, we define the vector $z_k$ as follows:
\begin{equation*}
    z_k = [z^{(1)}_k, \ldots, z^{(n_o)}_k ] \in \mathcal{Z},
\end{equation*} 
where each entry $z^{(i)}_k$ equals one if the $i$-th region is traversable and zero otherwise. For instance in the settings from Figure~\ref{fig:envScheme}, the environment's state $z_k = [0, 0, 1]$ as regions $\mathcal{R}_1$ and $\mathcal{R}_2$ are not traversable and region $\mathcal{R}_3$ is traversable.

\vspace{2pt}
\noindent\textbf{Strategy Overview}: We summarize the proposed multi-rate control architecture depicted in Figure~\ref{fig:summary}. The key idea is to divide the controller into three layers and compute the control action $u(t)$ as the summation of a high-frequency component $u_l(t)$ and a low-frequency component $u_m(t)$, i.e.,
\begin{equation*}
    u(t) = u_l(t) + u_m(t).
\end{equation*}
At the lowest level, the control action $u_l$ is updated continuously (at high frequency) and it is computed using Control Barrier Functions (CFBs), which leverage the full-nonlinear model~\eqref{eq:sysModel} to track a \textit{reference trajectory} $\bar x(t)$. 
The middle layer updates at a constant frequency the reference trajectory $\bar x(t)$ and reference input $u_m$, which is computed using a Model Predictive Controller (MPC). This reference trajectory steers the system from the current state $x(t)$ to a \textit{goal cell} $\Ggoal$.
Finally, the high-level planner computes the goal cell $\Ggoal$ based on partial observations $o_k$ about the environment.

\section{Unified Multi-Rate Architecture}\label{sec:proposedArchitecture}
In this section, we describe the multi-rate control architecture. First, we design a low-level CLF-CBF controller, which tracks a reference state-input trajectory and guarantees bounded tracking errors. Afterwards, we show how to update the state-input reference trajectory leveraging an MPC, which is designed using a goal state computed from a discrete high-level decision maker. Finally, we introduce the hierarchical multi-rate architecture, which guarantees that the synthesis objectives from Section~\ref{sec:probForm} are satisfied. 

\subsection{Low-Level Control}
We leverage CBFs and CLFs to design a low-level tracking controller for the nonlinear system~\eqref{eq:sysModel}. 
CBFs guarantee safety for nonlinear system~\cite{CBF}, but they are suboptimal as the control action is computed without forecasting the system's trajectory. For this reason, we use CBFs to enforce safety around a reference state-input trajectory that is computed at low frequency by the mid-level planner, as shown in Figure~\ref{fig:summary}. 

\vspace{2pt}
\noindent\textbf{Error Model:} 
At the lowest layer, the goal of the controller is to track a reference trajectory $\bar x(t)$. We assume that the reference trajectory is given by the following Linear Time-Varying (LTV) model:
\begin{equation}\label{eq:referenceModel}
\begin{aligned}
    \Sigma_{\bar x} : \begin{cases}
    \begin{matrix*}[l] \dot{\bar x}(t) = A_{\floor{t/T}} \bar x(t)+ B_{\floor{t/T}} u_m(t) \end{matrix*}, &  t \in \mathcal{T}\\
    \begin{matrix*}[l] \bar x^+(t) = \Delta_{\bar x}(x(t)) \end{matrix*}, &  t \in \mathcal{T}^c \\
    \end{cases},
\end{aligned}
\end{equation}
where $\mathcal{T}^c= \cup_{j=0}^\infty \{jT\}$, $\mathcal{T}= \cup_{j=0}^\infty  (jT, (j+1)T)$ and the time $T$ from~\eqref{eq:lowLevelCnstr} is specified by the user.
Furthermore, we denote $\bar x^-(t) = \lim_{\tau \shortarrow{1} t}\bar x(\tau)$ and $\bar x^+(t) = \lim_{\tau \shortarrow{7} t}\bar x(\tau)$ as the right and left limits of the reference trajectory $\bar x(t) \in \mathbb{R}^{n_x}$, which is assumed right continuous.
In the above system, the reference input $u_m(t) \in \mathbb{R}^{n_u}$ and the \textit{reset map} $\Delta_{\bar x} : \mathbb{R}^{n_x} \rightarrow \mathbb{R}^{n_x}$ maps the current state of the system $x(t)$ to the state $\bar x(t)$ of the reference trajectory. Both the reference input and the reset map are given by the middle layer as we will discuss in Section~\ref{sec:midLayer}. Finally, the time-varying matrices $(A_{\floor{t/T}}, B_{\floor{t/T}})$ are known and, in practice, may be computed linearizing the system dynamics~\eqref{eq:sysModel}, as discussed in the result section.

Given the nonlinear system~\eqref{eq:sysModel} and the LTV model~\eqref{eq:referenceModel}, we define the error state $e(t) = x(t)-\bar x(t)$ and the associated error dynamics:
\begin{equation}\label{eq:errorModel}
\begin{aligned}
    \Sigma_{e} : \begin{cases}
    \begin{matrix*}[l] \dot{e}(t) = f_e(x(t), \bar x(t), u_l(t)+u_m(t),t) \end{matrix*}, &  t \in \mathcal{T}\\
    \begin{matrix*}[l] e^+(t) = x^+(t) - \bar x^+(t)\end{matrix*}, &  t \in \mathcal{T}^c \\
    \end{cases}
\end{aligned}
\end{equation}
where the time-varying error dynamics are:
\begin{equation*}
\begin{aligned}
f_e(x, & \bar x, u_l+u_m, t)\\
&= f(x) + g(x)( u_l+u_m ) - (A_{\floor{t/T}} \bar x+ B_{\floor{t/T}} u_m).  
\end{aligned}
\end{equation*}
In the above definition, we dropped the dependence on time for states and inputs to simplify the notation. 
Furthermore, we introduce the low-level input constraint set $\mathcal{U}_l \subset \mathcal{U}$ and the mid-level input constraint set $\mathcal{U}_m \subset \mathcal{U}$ which partition the input space, i.e., 
\begin{equation*}
    \mathcal{U}_l \oplus \mathcal{U}_m = \mathcal{U}.
\end{equation*}

Next, we design a low-level controller which guarantees that the reference trajectory $\bar x(t)$ from the LTV model~\eqref{eq:referenceModel} is tracked within some error bounds.

\vspace{2pt}
\noindent\textbf{Control Barrier and Lyapunov Functions:} We show how to design a tracking controller using CBFs and CLFs~\cite{CBF}. First, we define the candidate Lyapunov function
\begin{equation}\label{eq:CLF}
    V(e) = ||e||_Q,
\end{equation}
where $||e||_Q = e^\top Q e$. Furthermore, we introduce the following safe set for the error dynamics~\eqref{eq:errorModel}:
\begin{equation}\label{eq:E}
\begin{aligned}
    & \mathcal{E} = \{ e \in \mathbb{R}^{n_x} : h_e(e) \geq 0 \} \subset \mathbb{R}^{n_x}.
\end{aligned}
\end{equation}
The above function $h_e$ is defined by the user and it depends on the application as discussed in the result section.

Finally, the CBF associated with the safe set~\eqref{eq:E}, and the CLF from~\eqref{eq:CLF} are used to define the following CLF-CBF Quadratic Program (QP):
\begin{equation}\label{eq:CLF-CBF-QP}
    \begin{aligned}
        v_l^{*}(t)= \argmin_{v_l \in \mathcal{U}_l, \gamma} ~ &||v_l||_2 + c_1 \gamma^2 \\
        \text{subject to} ~~ & \frac{\partial V(e)}{\partial e}f_e(x, \bar x, v_l+u_m) \leq - c_2V(e)+ \gamma \\
        & \frac{\partial h_{e}(e )}{\partial e}f_e(x, \bar x, v_l+u_m) \geq - \alpha(h_{e}(e)),
    \end{aligned}
\end{equation}
where we dropped the time dependence to simplify the notation, and $v_l \in \mathcal{U}_l$ is the low-level control action.
In the above QP, the parameters $c_1 \in \Rp$, $c_2\in\Rp$, and $\alpha \in \mathcal{K}^e$. Given the optimal control action $v^{*}_l(t)$ from the QP~\eqref{eq:CLF-CBF-QP}, the low-level control policy is defined as follows:
\begin{equation}\label{eq:lowLevPolicy}
    u_l(t) = \pi_{l}\big(x(t), \bar x(t), u_m(t) \big) = v_l^{*}(t).
\end{equation}

\begin{assumption}\label{ass:QPfeasibility}
The CLF-CBF QP~\eqref{eq:CLF-CBF-QP} is feasible for all $e = x-\bar x \in \mathcal{E}$ and for all $u_m \in \mathcal{U}_m$.
\end{assumption}

\begin{remark}
    We underline that Assumption~\ref{ass:QPfeasibility} is satisfied for some $\alpha \in \mathcal{K}^e$ when the set $\mathcal{E}$ is a robust control invariant for system~\eqref{eq:errorModel} with $u_m(t)\in\mathcal{U}_m$ and mild assumptions on the Lie derivative of~\eqref{eq:errorModel} hold (see \cite{CBF} for further details). The set $\mathcal{E}$ may be hard to compute and standard techniques are based on HJB reachability analysis~\cite{herbert2017fastrack}, SOS programming~\cite{singh2018robust}, Lyapunov-based methods~\cite{singh2017robust}, and Lipschitz properties of the system dynamics~\cite{chen2018data,yu2013tube}.
\end{remark}

The low-level control policy~\eqref{eq:lowLevPolicy} guarantees that the difference between the evolution of the nonlinear system~\eqref{eq:sysModel} and the LTV  model~\eqref{eq:referenceModel} is bounded. Indeed, when Assumption~\ref{ass:QPfeasibility} is satisfied, the CLF-CBF QP~\eqref{eq:CLF-CBF-QP} guarantees invariance of the safe set~\eqref{eq:E} for all $t \in (iT, (i+1)T)$ and $i\in\Zp$, as discussed in Section~\ref{sec:properties}. Next, we show how to design a mid-level planner which leverages the safe set~$\mathcal{E}$ from~\eqref{eq:E}.

\subsection{Mid-Level Planning}\label{sec:midLayer}
In this section we describe the mid-level planning strategy. 
At this level of abstraction, we assume that we are given a goal grid cell where we would like to steer the system. Afterwards, we compute a reference state-input trajectory using an MPC, which leverages a simplified model and the tracking error bounds from the previous section.

\vspace{2pt}
\noindent\textbf{Grid Model:}
Given the state $x(t) = [p^\top(t), q^\top(t)]^\top$, we define the current grid cell $\Gcurr$, which contains the nonlinear system~\eqref{eq:sysModel} for time $t \in [t^k, t^{k+1})$, i.e.,
\begin{equation}\label{eq:eqSet}
    p(t) \in \Gcurr \subset \mathcal{X}_p,~\forall t \in [t^k, t^{k+1}).
\end{equation}
Similarly, we define the goal cell $\Ggoal$, which represents the region where we want to steer the system for time $t \in [t^k, t^{k+1})$. Finally, we introduce the goal equilibrium sets $\Xcurr$ and $\Xgoal$, which collect the unforced equilibrium states that are contained into $\Gcurr$ and  $\Ggoal$, i.e., for $i\in \{\textrm{curr}, \textrm{goal}\}$
\begin{equation}\label{eq:eqSet}
\begin{aligned}
        &\mathcal{X}^k_i = \{ x =[p, q] \in \mathbb{R}^{n_x} | p \in \mathcal{C}^k_i, \dot x = f(x) = 0 \} \subset \mathbb{R}^{n_x}.
\end{aligned}
\end{equation}
Throughout this section, we assume that $t^k$, $\Xgoal$, $\Gcurr$, and $\Ggoal$ are given by the high-level planner and we synthesize a controller to drive the system from the current cell $\Gcurr$ to the goal cell~$\Ggoal$.

\vspace{2pt}
\noindent\textbf{Model Predictive Control:}
We design an MPC to compute the mid-level input $u_m(t)$ that defines the evolution of the reference trajectory~\eqref{eq:referenceModel} and to define the reset map $\Delta_{\bar x}$ for the LTV model~\eqref{eq:referenceModel}. The MPC problem is solved at $1/\textrm{T}$ Hz and therefore the reference mid-level control input is piecewise constant, i.e., 
\begin{equation*}
    \dot u_m(t) =0~\forall t \in \mathcal{T} = \cup_{k=0}^\infty (kT, (k+1)T).
\end{equation*}
Next, we introduce the following discrete time linear model:
\begin{equation}\label{eq:linearDiscreteSystem}
    \bar x^d\big((i+1)T\big) = \bar A_{i} \bar x^d\big(iT\big) + \bar B_{i} u\big(iT\big),
\end{equation}
where for all $i \in \mathbb{Z}_{0+}$
\begin{equation*}
    \bar A_{i} = e^{A_{\floor{iT/T}} T} \text{ and } \bar B_{i} = \int_0^{T} e^{A_{\floor{iT/T}}(T-\eta)}B_{\floor{iT/T}} d\eta,
\end{equation*}
for the matrices $A_{\floor{iT/T}}$ and $B_{\floor{iT/T}}$ defined in~\eqref{eq:referenceModel}. Now notice that, as the mid-level input  $u_m$ is piecewise constant, if at time $t_i = iT$ the state of the nominal model~\eqref{eq:referenceModel} $\bar x(iT) = \bar x^d(iT)$, then at time $t_{i+1} = (i+1)T$ we have that
\begin{equation}\label{eq:eqModelAcrossLayers}
\bar x^-((i+1)T)=\bar x^d((i+1)T).    
\end{equation}

Given the discrete time model~\eqref{eq:linearDiscreteSystem}, at time $t_i = iT \in \mathcal{T}^c$ we solve the following finite time  optimal control problem:
\begin{equation}\label{eq:ftocp}
\begin{aligned}
J(x(iT), N &)=\\
    \min_{\boldsymbol{v}_t, x_{i|i}^d} \quad &||x_{i|i}^d - x(iT)||_{Q_e}+ \sum_{t = i}^{i+N-1} \ell \big(  x_{t|i}^d, v_{t|i} \big) \\& \quad\quad\quad\quad\quad\quad\quad\quad\quad\quad\quad+ ||p_{i+N|i}^d-\pgoal||_{Q_f} \\
    \text{subject to} \quad & x^d_{t+1|i} = \bar A_t x^d_{t|i} + \bar B_t v^d_{t|i}\\
    &  x^d_{t|i} = \begin{bmatrix} p^d_{t+1|i} \\  q^d_{t+1|i} \end{bmatrix} \in \mathcal{X}_{p,q}^k \ominus \mathcal{E}, ~ v^d_{t|i} \in \mathcal{U}_m \\
    &  x^d_{i|i} - x(iT) \in \mathcal{E} \\
    &  x^d_{i+N|i} \in \Xgoal \ominus \mathcal{E}_p,\forall t = \{i, \ldots, i+N-1 \}
\end{aligned}
\end{equation}
where $\mathcal{E}$ is defined in~\eqref{eq:E}, $||p||_Q = p^\top Qp$,
\begin{equation}\label{eq:Xpq}
    \mathcal{X}_{p,q}^k=\bigg\{ x = \begin{bmatrix}p \\ {q} \end{bmatrix} \in \mathbb{R}^{n_x} | p \in \Gcurr \cup \Ggoal \text{ and } q \in \mathcal{X}_{q} \bigg\}
\end{equation}
and
\begin{equation}\label{eq:Ep}
    \mathcal{E}_{p}=\bigg\{ e = \begin{bmatrix}e_p \\ 0 \end{bmatrix} \in \mathbb{R}^{n_x} | \exists e_q \in \mathbb{R}^{n_p} \text{ and } \begin{bmatrix}e_p \\ e_q \end{bmatrix} \in \mathcal{E} \bigg\}.
\end{equation}
Notice that the MPC problem~\eqref{eq:ftocp} is designed based on the time-varying components $\Xgoal, \Gcurr, \Ggoal, \pgoal$ which are given by the high-level decision maker, as shown in Figure~\ref{fig:summary}.
Problem~\eqref{eq:ftocp} computes a sequence of open-loop actions $\boldsymbol{v}_t^d=[v^d_{t|t},\ldots,v^d_{t+N|t}]$ and an initial condition $x^d_{i|i}$ such that the predicted trajectory steers the system to the terminal set $\Xgoal$, while minimizing the cost and satisfying state and input constraints.
Let $\boldsymbol{v}_t^{d,*}=[v^{d,*}_{t|t},\ldots,v^{d,*}_{t+N|t}]$ be the optimal solution and $[x^{d,*}_{t|t},\ldots,x^{d,*}_{t+N|t}]$ the associated optimal trajectory, then the mid-level policy is \begin{equation}\label{eq:midLevPolicy}
\begin{aligned}
    \Pi_{m} : \begin{cases}
    \begin{matrix*}[l] {u_m}(t) =\!\pi_m\big(x(t), N\big) = v^{d,*}_{t|t} \end{matrix*} &  t \in \mathcal{T}^c\\
    \begin{matrix*}[l] \dot u_m(t) \!=\! 0 \end{matrix*} &  t \in \mathcal{T} \\
    \end{cases}.
\end{aligned}
\end{equation}
Finally, we define the reset map from the LTV model~\eqref{eq:referenceModel} as:
\begin{equation}\label{eq:returnMap}
\begin{aligned}
\Delta_{\bar x}(x(t)) = x_{t|t}^{d,*}.
\end{aligned}
\end{equation}


\begin{assumption}\label{ass:reachability}
Consider the equilibrium set $\mathcal{X}_\text{curr}^k$ defined in equation~\eqref{eq:eqSet}. For all states
$x(t) \in \Xcurr \oplus \mathcal{E} $ problem~\eqref{eq:ftocp} is feasible with horizon $N$.
\end{assumption}

The above assumption is satisfied when any equilibrium state $\bar x\in \Xcurr$ of the discrete time system~\eqref{eq:linearDiscreteSystem} can be steered to the goal equilibrium set $\Xgoal$ in at most $N$ time steps. More formally, Assumption~\ref{ass:reachability} holds when, for the discrete time system~\eqref{eq:linearDiscreteSystem}, $\Xgoal$ is $N$-step backward reachable from the set $\Xcurr$. 

In Section~\ref{sec:properties}, we will show that when the nonlinear system~\eqref{eq:sysModel} and the LTV system~\eqref{eq:referenceModel} are in closed-loop with the low-level policy~\eqref{eq:lowLevPolicy} and the mid-level policy~\eqref{eq:midLevPolicy}, then state and input constraints~\eqref{eq:lowLevelCnstr} are satisfied for system~\eqref{eq:sysModel}. 
Furthermore, the nonlinear system~\eqref{eq:sysModel} is steered from the current cell $\Gcurr$ to the goal cell $\Ggoal$ in finite time.

\begin{remark}
    We highlight that also RRT-based methods can be combined with CLF-CBF to design a multi-rate control architecture. In particular, it would be possible to leverage sampling-based methods, such as~\cite{karaman2011sampling, arslan2013use, ghosh2019kinematic, kuffner2000rrt}, to repeatedly solve online problem~\eqref{eq:ftocp}. 
    Notice that it important to consider the constraint tightening from  problem~\eqref{eq:ftocp} that accounts for the low-level tracking error. Indeed, this constraint tightening strategy allow us to guarantee safety of the nonlinear system~\eqref{eq:sysModel} in closed-loop with the proposed multi-rate control architecture, as we will discuss later on.
\end{remark}

\vspace{-0.2cm}
\subsection{High Level Decision Making}\label{sec:highLevel}
In this section, we first describe how to compute a control policy which maximizes the probability of satisfying the specifications. Afterwards, we show how to compute the time-varying components $\pgoal$, $\Gcurr$, $\Ggoal$, and $\Xgoal$ used in the MPC problem~\eqref{eq:ftocp}.

\vspace{2pt}
\noindent\textbf{Belief Model:}
For the MOMDP from Section~\ref{sec:probForm}, we introduced the belief vector $b_{k} \in \mathcal{B}$ that represents the posterior probability that the partially observable state $z_k$ equals $z \in \mathcal{Z}$.
The belief is a sufficient statistic and, for all $z'\in \mathcal{Z}$, it evolves accordingly to the following update equation: 
\begin{equation*}
\begin{aligned}
    b_{k+1}^{(z')} &=\eta O(s_{k+1},z', a_k, o_k) \\
    &\times \sum_{z\in\mathcal{Z}}T_s(s_k,z,a_k,s_{k+1})T_z(s_k,z,a_k,s_{k+1},z_{k+1})\be_k^{(z)},
\end{aligned}
\end{equation*}
where $\eta$ is a normalization constant~\cite{ong2010planning}. Notice that the above update equation can be written in a compact form, i.e.,
\begin{equation}\label{eq:beliefUpdate}
    b_{k+1} = f_b(s_{k+1}, s_k, o_k, a_k, b_k), 
\end{equation}
where $f_b: \mathcal{S}\times\mathcal{S}\times\mathcal{O}\times\mathcal{A}\times\Be \rightarrow \Be$. Finally, given the belief $b_k$, we introduce the following maximum likelihood environment state estimate:
\begin{equation}\label{eq:ML}
    \hat z_k = \argmax_{z \in \mathcal{Z}} \mathbb{P} (z_k =  z| \boldsymbol{o}_k,\boldsymbol{s}_k,\boldsymbol{a}_{k-1}) = \argmax_{z \in \mathcal{Z}} ~ b^{(z)}.
\end{equation}

\vspace{2pt}
\noindent\textbf{Control Policy: }
At the highest level of abstraction our goal is to compute a control policy $\pi_h$, which maximizes the probability that the high-level trajectory $\boldsymbol{\omega}$ satisfies the specifications $\psi$.
Such control control policy can be computed solving the following problem:
\begin{equation}\label{eq:quantitativeProblem}
    \pi_h = \argmax_{\pi} \mathbb{P}^{\pi}[\boldsymbol{\omega} \models \psi],
\end{equation}
where $\mathbb{P}^{\pi}[\boldsymbol{\omega} \models \psi]$ represents the probability that the specification $\psi$ is satisfied for the closed-loop trajectory  $\boldsymbol{\omega}$ under the policy $\pi$.
The solution to the above problem can be approximated using point-based and simulation-based strategies~\cite{bouton2020point,haesaert2019temporal,vasile2016control,haesaert2018temporal,wang2018bounded, shani2013survey}. In this work, we used the point-based strategy discussed in~\cite{rosolia2021time}. The resulting high-level control policy maps the high-level state $s_k$ and the environment belief $b_k$ to the high-level control action $a_k$, i.e., $a_k = \pi_h(s_k, b_k)$.

\begin{algorithm}[t!]
\SetAlgoLined
\textbf{inputs:} $x(t)$, $o_{k}$, $s_{k-1}$, $a_{k-1}$, $b_{k-1}$  \;
 set current high-level state $s_k = \texttt{getState}(x(t))$\! \;
 compute current set $\Gcurr = \texttt{getCell}(s_{k})$\! \;
  update belief $b_{k}$ using~\eqref{eq:beliefUpdate} \;
 compute high-level action $a_{k} = \pi_h(s_k,b_{k}) $ \;
 compute maximum likely estimate $\hat z_{k}$ using \eqref{eq:ML}\;
 update state $s_{k+1} = {f}_s(s_k, \hat z_k, a_k)$ \;
 compute goal set $\Ggoal= \! \texttt{getCell}(s_{k+1})$\! \;
 compute the forecasted action $\hat a = \pi_h(s_{k+1},b_{k})$ \;
 compute the forecasted state $\hat s_{k+2} = f_s(s_{k+1}, \hat z_k, \hat a)$ \;
 set forecasted set $\Gforc= \! \texttt{getCell}(s_{k+1})$\! \;
 get forecasted cell center $c^\textrm{forc} = \texttt{getCenter}(\Gforc)$ \;
 compute goal position $\pgoal = \text{Proj}( c^\textrm{forc}, \Ggoal)$ \;
\textbf{return:} $a_k$, $b_{k}$, $s_k$, $\Ggoal$, $\Gcurr$, $\pgoal$ \caption{Update High-Level}\label{algo:high-level_planning}
\end{algorithm}

The high-level policy~\eqref{eq:quantitativeProblem} is leveraged in Algorithm~\ref{algo:high-level_planning} to compute the goal position $\pgoal$ and the sets $\Gcurr$ and $\Ggoal$, which are used in the MPC problem~\eqref{eq:ftocp}. 
In Algorithm~\ref{algo:high-level_planning}, we first use the function \texttt{getState}, which maps the current state $x(t)$ to the high-level state $s_k$ representing  the  cell  containing the  nonlinear  system~\ref{eq:sysModel} (line~2). Then, we compute the current cell $\Gcurr$ associated with the high-level state $s_k$ using the function \texttt{getCell} (line~$3$). Afterwards, we update the belief state $b_k$ and we compute the control action $a_k$ (lines $4-5$). Given the control action $a_k$ and the maximum likelihood estimator of the environment state $\hat z_k$, we update the high-level state and we compute the goal cell $\Ggoal$ (lines $6-8$). Next, given the current belief $b_k$, we compute the action $\hat a$ that the high-level planner would select at the next update $k+1$ assuming that the new observation is not informative, i.e., the belief $b_{k+1} = b_k$. We leverage the action $\hat a$ to estimate the high-level state $\hat{s}_{k+2}$, which represents the location where we should steer the system after transitioning to the high-level state $s_{k+1}$, if $b_{k+1}=b_k$. The state  $\hat{s}_{k+2}$ is used to incorporate forcast into the high-level planner. In particular, given the $\hat{s}_{k+2}$ we compute the forcasted cell center $c^\textrm{forc}\in \mathbb{R}^{n_p}$ and the forecasted cell $\Gforc$ representing the grid cell where the high-level decision maker would like to steer the system, if no informative observations are collected (lines~$11-12$). Finally, the goal cell $\Ggoal$ and the forecasted center $c^\textrm{forc}\in\mathbb{R}^{n_p}$ are used to compute the goal position $p^\mathrm{goal}$~(line~$13$).

\begin{figure}[t!]
    \centering
	\includegraphics[trim= 3mm 3mm 3mm 3mm, clip, width= 0.99\columnwidth]{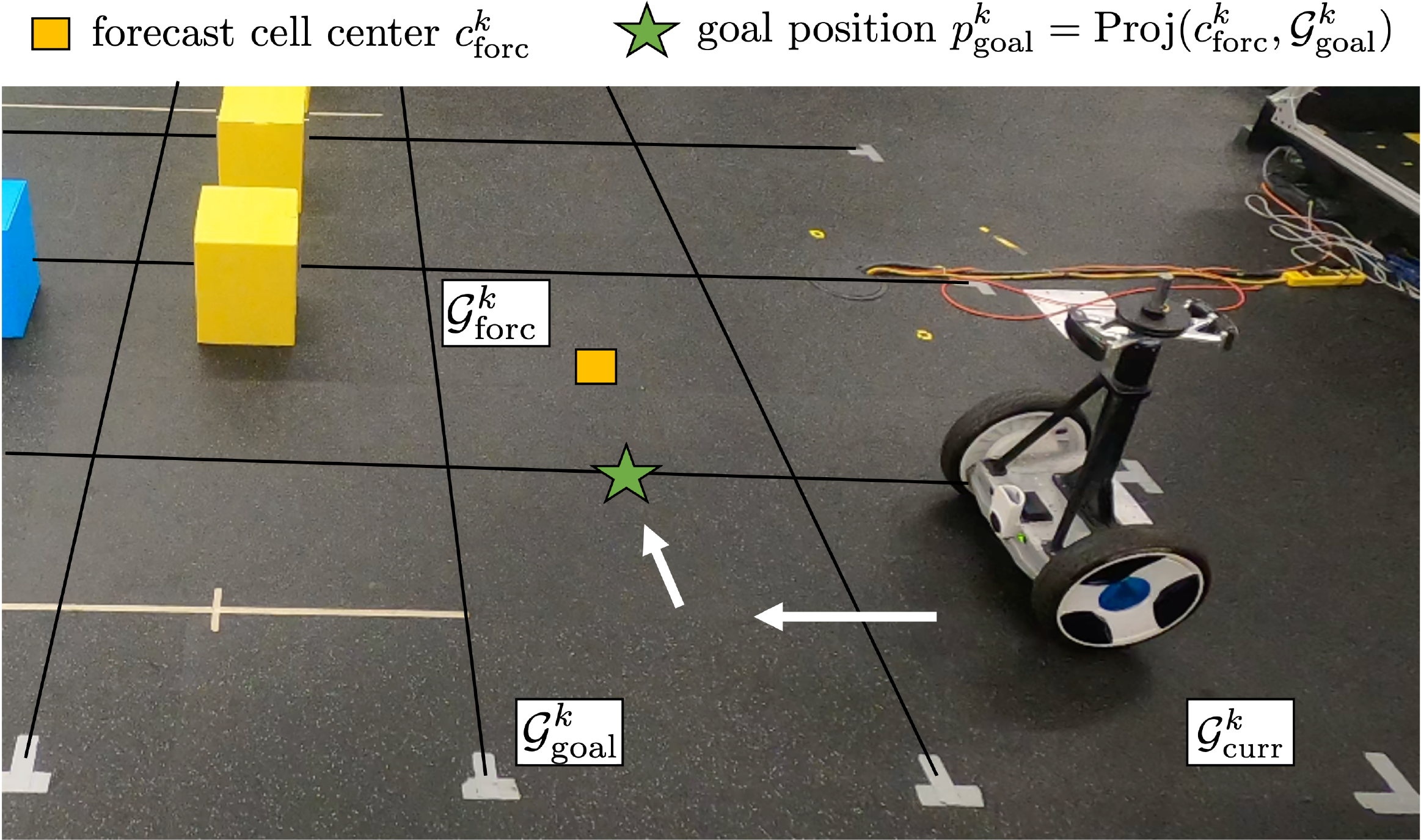}
    \caption{The above figure illustrates the high-level updated from Algorithm~\ref{algo:high-level_planning} that is used to compute the goal position (green star).}
    \label{fig:visAlgo}
\end{figure}

Figure~\ref{fig:visAlgo} illustrates the high-level update from Algorithm~\ref{algo:high-level_planning} that is used to compute the goal position leveraged in the design of the mid-level MPC. In this example, the Segway is located in the bottom right corner of the grid and the current high-level action $a_k$ is to move west. The figure shows also the forecasted action $\hat a$ that the Segway would take from the goal region, if the belief $b_k$ is not updated. Basically, $\hat a$ is a high-level open-loop prediction of the future control action and it is used to incorporate forecast into the high-level decision maker. Indeed, the goal position $\pgoal$ is computed projecting the forecasted cell center $c^\mathrm{forc}$ onto the goal cell $\Ggoal$.

\begin{algorithm}[t!]
\SetAlgoLined
\textbf{inputs:} $k$, $s_k$, $b_k$, $a_k$, $i$, $x(t)$, $u_m(t)$, $\bar x(t)$,  $\Gcurr$, $\Ggoal$, $\pgoal$, $N^k_i$, $\GcurrMinus$, $\GcurrMinus$, $\pgoalMinus$, $N^{k-1}_i$ \;
\If{$p(t) \in \Ggoal$ or $k=0$} { 
\tcp{Update high-level goal}
measure $o_{k+1}$ \;
update $a_{k+1}$,$b_{k+1}$, $s_{k+1}$, $\Xgoal$, $\GgoalPlus$, $\GcurrPlus$, $\pgoalPlus$ using Algorithm~\ref{algo:high-level_planning} with $x(t)$, $o_{k+1}$, $s_{k}$, $a_{k}$, $b_{k}$ \;
set $N^{k+1}_i = N$ \;
$k = k + 1$ \;
}
\If{$t \in \mathcal{T}^c = \cup_{j=0}^\infty \{jT\}$}{ 
    \tcp{Update mid-level plan}
    solve MPC problem~\eqref{eq:ftocp} with $N=N^k_i$, and $\Xgoal$, $\Gcurr$, $\Ggoal$, $\pgoal$ \;
    \eIf{the MPC problem~\eqref{eq:ftocp} is not feasible}{ 
        solve MPC problem~\eqref{eq:ftocp} with $N=N^{k-1}_i$, and  $\XgoalMinus$, $\GcurrMinus$, $\GgoalMinus$, $\pgoalMinus$ \;
        set $N^{k-1}_{i+1} = \max(1, N^{k-1}_{i} - 1)$ \;
        set $N^{k}_{i+1} = N^{k}_{i}$ \;
    }{
        set $N^{k-1}_{i+1} = N^{k-1}_{i}$ \;
        set $N^{k}_{i+1} = \max(1, N^{k}_{i} - 1)$ \;
    }
    set $u_m(t)= v_{t|t}^{d,*}+K(x(t)- \bar x_{t|t}^{d,*})$ \;
    update $\bar x(t)= \Delta_{\bar x}(x(t)) = \bar x_{t|t}^{d,*}$ \;
    $i = i+1$ \;
}
\tcp{Compute low-level control}
solve the CBF problem~\eqref{eq:CLF-CBF-QP} \;
Compute total input $u(t) = u_l(t)+u_m(t)$ \;
\textbf{Return:}{ $u(t)$, $k$, $s_k$, $b_k$, $a_k$, $i$, $x(t)$, $u_m(t)$, $\bar x(t)$, $\Gcurr$, $\Ggoal$, $\pgoal$, $N^k_i$, $N^{k-1}_i$}
\caption{Multi-Rate Control}\label{algo:multiRate}
\end{algorithm}

\vspace{-0.3cm}
\subsection{Control Architecture}
Finally, we introduce the multi-rate hierarchical control architecture which leverages the low-level, mid-level, and high-level control policies from the previous sections.
The multi-rate control Algorithm~\ref{algo:multiRate} describes the architecture depicted in Figure~\ref{fig:summary}. When the nonlinear system~\eqref{eq:sysModel} reaches the goal cell (i.e., $p(t) \in \Ggoal$), the high-level decision maker reads the new observations $o_{k+1}$ and updates high-level state, action, goal position $\pgoal$, goal cell $\Ggoal$, and current cell $\Gcurr$ (lines~$3-4$). Finally, it updates the high-level time $k$ and it initializes the MPC horizon $N_i^k=N$. 
Afterwards, the mid-level planner (lines $8-20$) updates the mid-level time counter $i$ and the planned trajectory at a constant frequency of $1/T$ Hz. First, it solves the MPC problem~\eqref{eq:ftocp} with $N=N_i^k$ and time-varying components $\Xgoal$, $\Ggoal, \Gcurr$, and $\pgoal$. If the MPC problem is not feasible, the planner computes a contingency plan (lines $10-14$), otherwise it updates the prediction horizon (lines~15-16). Note that the MPC problem solved in line~9 of Algorithm~\ref{algo:high-level_planning} may be not feasible as the terminal constraint set $\Xgoal$ is updated by the high-level planner. For this reason, we introduced the contingency plan (lines $10-14$), where the MPC problem from line~11 is constructed using the terminal constraint set $\XgoalMinus$. As we will show in the proof of Theorem~\ref{th:safe}, when the MPC problem constructed with terminal constraint set $\Xgoal$ is not feasible, we can guarantee the feasibility of the contingency MPC with $\XgoalMinus$ as terminal constraint. This fact allows us to guarantee safety for the closed-loop system.
Finally, Algorithm~\ref{algo:multiRate} computes the low-level control action solving the CLF-CBF~QP~\eqref{eq:CLF-CBF-QP} and the total control input that is given by the summation of the mid-level and low-level control actions, i.e.,
\begin{equation*}
    u(t) = u_l(t)+u_m(t).
\end{equation*}

\section{Safety and Performance Guarantees}\label{sec:properties}
In this section we show the properties of the proposed multi-rate control architecture. We consider the augmented system:
\begin{equation}\label{eq:augSys}
\begin{aligned}
    \Sigma_{\textrm{aug}} : \begin{cases}
    \begin{matrix*}[l]\dot x(t) = f\big(x(t)\big) +g\big(x(t)\big)\big(u_l(t)+u_m(t)\big), & t\geq0 \\
    \dot{\bar x}(t) = A_{\floor{t/T}} \bar x(t)+ B_{\floor{t/T}} u_m(t) , &  t \in \mathcal{T}\\
    \bar x^+(t) = \Delta_{\bar x}(x(t)) , &  t \in \mathcal{T}^c \\
    \end{matrix*}
    \end{cases}
\end{aligned}
\end{equation}
where the nonlinear dynamics for state $x(t) \in \mathbb{R}^{n_x}$ are defined in~\eqref{eq:sysModel} and the LTV model for the nominal state $\bar x(t) \in \mathbb{R}^{n_x}$ is defined in~\eqref{eq:referenceModel} for the reset map~\eqref{eq:returnMap} given by the MPC. 
In what follows, we analyse the properties of the proposed multi-rate control Algorithm~\ref{algo:multiRate} in closed-loop with system~\eqref{eq:augSys}. 
We show that the closed-loop system satisfies state and input constraints~\eqref{eq:lowLevelCnstr} and that the proposed algorithm maximizes the probability of satisfying the specifications. Notice that in practice the state $x(t)$ is given by the nonlinear system~\eqref{eq:sysModel}, whereas the nominal state $\bar x(t)$ is computed by the low-level layer to update the tracking error $e(t)$, as shown in Figure~\ref{fig:summary}.

\begin{proposition}\label{Proposition:llProp}
Consider the closed-loop system~\eqref{eq:lowLevPolicy} and~\eqref{eq:augSys} with mid-level input $u_m(t)\in\mathcal{U}_m$ and $\dot u_m(t)=0,\forall t \in \mathcal{T}$. If Assumption~\ref{ass:QPfeasibility} holds and the error $e(kT) = x(kT) - \bar x(kT) \in \mathcal{E}$ for all $k \in \Zp$, then the control policy~\eqref{eq:lowLevPolicy} guarantees that $e(t) \in \mathcal{E}$ and $u_l(t)\in \mathcal{U}_l,~\forall t\in [kT, (k+1)T)$.
\end{proposition}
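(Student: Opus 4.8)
The plan is to exploit the structure of the CLF-CBF QP and a standard comparison/invariance argument on each sampling interval $[kT,(k+1)T)$. The statement has two parts: the input containment $u_l(t)\in\mathcal{U}_l$ and the set invariance $e(t)\in\mathcal{E}$. The first part is immediate: by construction, the low-level policy~\eqref{eq:lowLevPolicy} sets $u_l(t)=v_l^*(t)$, and the decision variable in~\eqref{eq:CLF-CBF-QP} is explicitly constrained to lie in $\mathcal{U}_l$; since Assumption~\ref{ass:QPfeasibility} guarantees the QP is feasible for every $e\in\mathcal{E}$ and every $u_m\in\mathcal{U}_m$, the optimizer exists and $v_l^*(t)\in\mathcal{U}_l$ for all $t$ in the interval (using also that $u_m(t)$ is constant on $\mathcal{T}$, so the hypotheses of the QP feasibility assumption are met throughout). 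So the real content is the invariance claim.

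For the invariance of $\mathcal{E}$, I would fix $k$ and work on the open interval $t\in(kT,(k+1)T)$, on which the dynamics are the smooth flow $\dot e=f_e(x,\bar x,u_l+u_m,t)$ with no reset (resets occur only on $\mathcal{T}^c$). The key observation is that whenever $e(t)$ lies on the boundary $\partial\mathcal{E}$, i.e. $h_e(e(t))=0$, the second constraint of the QP~\eqref{eq:CLF-CBF-QP}, namely $\tfrac{\partial h_e(e)}{\partial e}f_e(x,\bar x,v_l^*+u_m)\ge-\alpha(h_e(e))=-\alpha(0)=0$ (since $\alpha\in\mathcal{K}^e$ so $\alpha(0)=0$), forces $\dot h_e(e(t))\ge 0$. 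This is the classical Nagumo-type subtangentiality condition, so $\mathcal{E}$ is forward invariant along the closed-loop flow; one then invokes the comparison lemma with the differential inequality $\dot h_e(e(t))\ge-\alpha(h_e(e(t)))$ to conclude that if $h_e(e(kT))\ge 0$ then $h_e(e(t))\ge 0$ for all $t\in[kT,(k+1)T)$ — here I use right-continuity of the trajectory so the initial value at $t=kT$ is exactly $e(kT)\in\mathcal{E}$, which is the hypothesis $e(kT)=x(kT)-\bar x(kT)\in\mathcal{E}$. Since this holds on each interval and the hypothesis $e(kT)\in\mathcal{E}$ is assumed for every $k\in\Zp$, we get $e(t)\in\mathcal{E}$ for all $t\ge 0$ on each $[kT,(k+1)T)$, hence for all $t$.

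The main obstacle is making the comparison-lemma step rigorous given that $f_e$ depends on $x(t)$, $\bar x(t)$ and the QP-optimizer $v_l^*(t)$, so the right-hand side of the differential inequality is only as regular as $v_l^*(\cdot)$; one needs enough regularity (e.g. local Lipschitzness or at least measurability/local boundedness of $t\mapsto v_l^*(t)$, which follows from Lipschitz continuity of $f,g$ and the structure of the strictly convex QP) for existence of an absolutely continuous solution $e(t)$ and for Gronwall/comparison to apply on $(kT,(k+1)T)$. I would handle this by citing the standard CBF invariance result~\cite{CBF}, which already establishes that feasibility of the CLF-CBF QP on $\mathcal{E}$ plus the class-$\mathcal{K}^e$ condition yields forward invariance of $\mathcal{E}$; the only adaptation is the presence of the known, piecewise-constant, bounded disturbance-like term $u_m$ and the time-varying $(A_{\lfloor t/T\rfloor},B_{\lfloor t/T\rfloor})$, which are constant on each interval $(kT,(k+1)T)$ and hence do not affect the argument. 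A secondary, minor point is to note explicitly that the reset at $t=kT$ does not destroy invariance: by hypothesis $e(kT)\in\mathcal{E}$ is assumed directly, so I do not even need to reason about the reset map here — that is deferred to the theorem that uses this proposition.
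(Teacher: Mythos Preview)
Your proposal is correct and follows essentially the same approach as the paper: reduce to a standard CBF forward-invariance argument on each interval $[kT,(k+1)T)$ by noting that $(A_{\lfloor t/T\rfloor},B_{\lfloor t/T\rfloor})$ and $u_m$ are constant there, so the error dynamics are control-affine in $u_l$, and then invoke feasibility of the CLF-CBF QP (Assumption~\ref{ass:QPfeasibility}) together with the standard result from~\cite{CBF}. Your treatment is in fact more detailed than the paper's, which is quite terse and does not spell out the comparison-lemma step, the input-containment part, or the regularity of $v_l^*(\cdot)$.
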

\begin{proof}
The proof follows from standard CBF arguments~\cite{CBF}. First, we notice that the error $e(kT)= x(kT) - \bar x(kT)$ follows the error dynamics in~\eqref{eq:errorModel}. Furthermore, by construction the time-varying matrices $(A_{\floor{t/T}}, B_{\floor{t/T}})$ are constant for $t \in [kT, (k+1)T)$. Therefore, for all $k\in\Zp$ and $t \in [kT, (k+1)T)$, we have that error dynamics in~\eqref{eq:errorModel} are nonlinear control affine for the low-level input $u_l$. This fact implies that, if at time $t=kT$ the error $e(kT) = x(kT) - \bar x(kT) \in \mathcal{E}$, then from the feasibility of the CLF-CBF QP~\eqref{eq:CLF-CBF-QP} from Assumption~\ref{ass:QPfeasibility} we have that $e(t)=x(t)-\bar x(t) \in \mathcal{E},~\forall t\in [kT, (k+1)T)$.
\end{proof}

Proposition~\ref{Proposition:llProp} shows that between time $t_i=iT$ and $t_{i+1}=(i+1)T$ the difference between the state $x$ and the state $\bar x$ of the reference trajectory is bounded. Next, we show that this property allows us to guarantee safety and convergence in finite time of the nonlinear system~\eqref{eq:sysModel} to a goal cell $\Ggoal$ contained in the feasible region. In turns, convergence in finite time allows us to show that the proposed approach maximizes the probability of satisfying the high-level specifications.

\begin{assumption}\label{ass:freePath}
Algorithm~\ref{algo:high-level_planning} returns a goal cell $\Ggoal$ which is contained in the feasible set $\mathcal{X}_p$, i.e., $\Ggoal\subset\mathcal{X}_p$.
\end{assumption}


\begin{theorem}\label{th:safe}
Let Assumptions~\ref{ass:QPfeasibility}-\ref{ass:freePath} hold and consider system~\eqref{eq:augSys} in closed-loop with Algorithm~\ref{algo:multiRate}. If  at time $t_i=iT$ the MPC problem~\eqref{eq:ftocp} is feasible with $N_i^k = N$ and time-varying components $\Xgoal, \Gcurr, \Ggoal$, and $\pgoal$, then there exists a $j\in\{i,\ldots, i+N-1\}$ such that the closed-loop system satisfies state and input constraints~\eqref{eq:lowLevelCnstr} for all 
$ t \in \{iT,\ldots, jT\}$  and the state $x((j+1)T) = [p^\top((j+1)T), q^\top((j+1)T)]^\top$ reaches the goal cell $\Ggoal$, i.e., $p((j+1)T) \in \Ggoal$.
\end{theorem}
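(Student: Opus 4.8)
The plan is to argue by a forward induction on the mid-level time index, tracking the feasibility of the MPC problem~\eqref{eq:ftocp} and the invariance of the tracking error. The base step is the hypothesis: at $t_i = iT$ the MPC~\eqref{eq:ftocp} is feasible with horizon $N$ and the current time-varying components. The key invariant I would maintain is that, at each subsequent mid-level step $t_{i'} = i'T$ with $i' \ge i$, either (a) the system has already reached $\Ggoal$ and we are done, or (b) the MPC problem~\eqref{eq:ftocp} solved by Algorithm~\ref{algo:multiRate} (either the nominal one with horizon $N^k_{i'}$, or the contingency one with horizon $N^{k-1}_{i'}$) is feasible, and the constraint $x^d_{i'|i'} - x(i'T) \in \mathcal{E}$ together with Proposition~\ref{Proposition:llProp} ensures $e(t) \in \mathcal{E}$, $u_l(t) \in \mathcal{U}_l$ for all $t \in [i'T, (i'+1)T)$.

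First I would establish the inductive step for feasibility. Suppose at $t_{i'}$ the MPC is feasible with optimal trajectory $[x^{d,*}_{i'|i'}, \ldots, x^{d,*}_{i'+N^k_{i'}|i'}]$ and optimal input sequence. The reset map~\eqref{eq:returnMap} sets $\bar x(i'T) = x^{d,*}_{i'|i'}$, and by~\eqref{eq:eqModelAcrossLayers} the LTV model and the discrete model~\eqref{eq:linearDiscreteSystem} agree at $t_{i'+1}$, so $\bar x^-((i'+1)T) = x^{d,*}_{i'+1|i'}$. By Proposition~\ref{Proposition:llProp}, $e((i'+1)T)^- = x((i'+1)T) - x^{d,*}_{i'+1|i'} \in \mathcal{E}$. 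Now shift the optimal trajectory forward by one step: the tail $[x^{d,*}_{i'+1|i'}, \ldots, x^{d,*}_{i'+N^k_{i'}|i'}]$ is a feasible trajectory for the MPC at $t_{i'+1}$ with horizon $N^k_{i'} - 1$, taking $x^d_{i'+1|i'+1} = x^{d,*}_{i'+1|i'}$ so that the constraint $x^d_{i'+1|i'+1} - x((i'+1)T) \in \mathcal{E}$ holds, the state/input constraints $\mathcal{X}^k_{p,q} \ominus \mathcal{E}$, $\mathcal{U}_m$ are inherited, and the terminal constraint $x^d_{i'+N^k_{i'}|i'} \in \Xgoal \ominus \mathcal{E}_p$ is unchanged. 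Hence the nominal MPC is feasible at $t_{i'+1}$ with horizon $N^k_{i'} - 1 = N^k_{i'+1}$ (when $N^k_{i'} > 1$), which closes the induction and shows the horizon decreases by one each step as long as the high-level goal is not updated. Since the horizon is decremented until it would hit $0$, after at most $N$ mid-level steps the optimal trajectory has length one, meaning the terminal constraint $x^{d,*}_{i'|i'} \in \Xgoal \ominus \mathcal{E}_p$ is imposed on the initial state; combined with $x(i'T) - x^{d,*}_{i'|i'} \in \mathcal{E}$, the definition of $\mathcal{E}_p$ in~\eqref{eq:Ep} and the structure of $\Xgoal$, this forces the position component $p(i'T) \in \Ggoal$. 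Setting $j$ to be the first such index (or earlier, if $p$ enters $\Ggoal$ sooner) gives the claim, and constraint satisfaction on $\{iT, \ldots, jT\}$ follows because on each interval $[i'T, (i'+1)T)$ the error lies in $\mathcal{E}$ while the nominal trajectory respects the tightened constraints $\mathcal{X}^k_{p,q} \ominus \mathcal{E}$, so $x(i'T) \in \mathcal{X}^k_{p,q} \subset \{p \in \mathcal{X}_p, q \in \mathcal{X}_q\}$ and $u(t) = u_l(t) + u_m(t) \in \mathcal{U}_l \oplus \mathcal{U}_m = \mathcal{U}$.

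The main obstacle I anticipate is handling the contingency branch cleanly: the theorem's hypothesis is feasibility of the MPC with the \emph{current} components at $t_i$, but Algorithm~\ref{algo:multiRate} may at some later step solve the MPC with the \emph{previous} components $\XgoalMinus, \GcurrMinus, \ldots$ when the nominal problem turns infeasible. I would need to argue that the recursive-feasibility shift above applies verbatim to whichever problem is actually solved — the argument is identical since both share the same structure — and that the interleaving of the two horizon counters $N^k_i$ and $N^{k-1}_i$ (each decremented only on the steps where its problem is the one solved) still terminates, so that within $N$ steps of $t_i$ the position reaches $\Ggoal$ (note the high-level goal is only updated once $p(t) \in \Ggoal$, so $\Ggoal$ and $N^k_i$ are fixed throughout the interval under consideration). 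A secondary technical point is the transition at a reset time: one must check that right-continuity of $\bar x$ and the reset $\bar x^+(t) = \Delta_{\bar x}(x(t)) = x^{d,*}_{t|t}$ are consistent with the error being in $\mathcal{E}$ immediately after the reset, which is exactly the MPC constraint $x^d_{i'|i'} - x(i'T) \in \mathcal{E}$, so the error invariant is re-established at every reset and Proposition~\ref{Proposition:llProp} can be reapplied on the next interval.
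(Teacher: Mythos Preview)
Your approach is essentially the same as the paper's: recursive feasibility via the shifted tail of the optimal trajectory, invariance of the error set via Proposition~\ref{Proposition:llProp} and the reset map, constraint satisfaction from the tightening $\mathcal{X}^k_{p,q}\ominus\mathcal{E}$ together with $e\in\mathcal{E}$, and finite-time arrival from the terminal constraint $\Xgoal\ominus\mathcal{E}_p$.

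Two small corrections. First, the horizon in Algorithm~\ref{algo:multiRate} is clipped at $1$ via $N^k_{i+1}=\max(1,N^k_i-1)$, so it never reaches $0$ and the terminal constraint is never imposed on the \emph{initial} predicted state. The paper's termination argument is: at $t_{i+N-1}$ the horizon is $1$, so $x^{d,*}_{i+N|i+N-1}\in\Xgoal\ominus\mathcal{E}_p$; then Proposition~\ref{Proposition:llProp} and~\eqref{eq:eqModelAcrossLayers} give $x((i+N)T)-x^{d,*}_{i+N|i+N-1}\in\mathcal{E}$, hence $p((i+N)T)\in\Ggoal$, i.e.\ $j=i+N-1$. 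Your conclusion is right but the mechanism is one step later than you wrote. Second, the contingency branch is not needed here and the paper does not invoke it in this proof: since $p(t)\notin\Ggoal$ on $\{iT,\ldots,jT\}$, the high-level update in lines~2--7 never fires, the time-varying components stay fixed, and your recursive-feasibility argument already shows the \emph{nominal} MPC in line~9 remains feasible, so line~10 is never triggered. The contingency analysis belongs to (and is carried out in) Theorem~\ref{th:spec}.
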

\begin{proof}
From Assumption~\ref{ass:freePath} we have that
the high-level policy~\eqref{eq:quantitativeProblem}
takes a high-level action $a_k$ which avoids collision with the obstacles, i.e.,
\begin{equation}\label{eq:feasibleGoal}
    \Ggoal\subset\mathcal{X}_p.
\end{equation}

Next, we show that if at time $t_i = iT$ the MPC problem~\eqref{eq:ftocp} is feasible with $\Xgoal, \Ggoal, \Gcurr, \pgoal$, and $N_i^k>1$, then at time $t_{i+1}=(i+1)T$ the MPC problem~\eqref{eq:ftocp} is feasible with $\Xgoal, \Ggoal, \Gcurr, \pgoal$, and $N^k_{i+1}=N^k_i-1$. Let 
\begin{equation}\label{eq:optPlannerTrajectory}
    [x_{i|i}^{d,*}, x_{i+1|i}^{d,*},\ldots,x_{i+N^k_i|i}^{d,*}] \text{ and } [u_{i|i}^{d,*},\ldots,u_{i+N^k_i-1|i}^{d,*}]
\end{equation}
be the optimal state input sequence to the MPC problem~\eqref{eq:ftocp} at time $t_i = iT$. Then, from Proposition~\ref{Proposition:llProp}, equation~\eqref{eq:eqModelAcrossLayers}, and the definition of the reset map~\eqref{eq:returnMap}, we have that 
\begin{equation}\label{eq:discContConn}
    x((i+1)T)-\bar x_{i+1|i}^{d,*} = x((i+1)T)-\bar x^-((i+1)T) \in \mathcal{E},
\end{equation}
and therefore, by feasibility of~\eqref{eq:optPlannerTrajectory} at time $t_i$, the following sequences of $N^k_i-1$ states and $N^k_i-2$ inputs
\begin{equation}\label{eq:feasTr}
    [x_{i+1|i}^{d,*},\ldots,x_{i+N^k_i|i}^{d,*}] \text{ and } [u_{i+1|i}^{d,*},\ldots,u_{i+N^k_i-1|i}^{d,*}]
\end{equation}
are feasible at time $t_{i+1}=(i+1)T$ for the MPC problem~\eqref{eq:ftocp} with $\Xgoal, \Ggoal, \Gcurr, \pgoal$, and $N^k_{i+1}=N^k_i-1$. 


Now, we show that state and input constraints are satisfied until the system reaches the goal set $\Ggoal$. 
Recall that by assumption the MPC problem is feasible at time $t_i=iT$ with $\Xgoal, \Ggoal, \Gcurr, \pgoal, N_i=N$ and assume that $p(jT) \notin \Ggoal$ for all $j \in \{i,\ldots, i+N-1\}$. By induction the MPC problem~\eqref{eq:ftocp} with $\Xgoal, \Ggoal, \Gcurr, \pgoal$ and $N^k_j=N^k_i-j$ is feasible for all $j\in\{i,\ldots, i+N-1\}$. Consequently, Algorithms~\ref{algo:high-level_planning} 
returns a feasible mid-level control action\footnote{Note that as $p(jT) \notin \Ggoal$ for all $j \in \{i,\ldots, i+N-1\}$ the MPC time-varying components are not updated.} $u_m(t)\in\mathcal{U}_m$.
Furthermore, from Proposition~\ref{Proposition:llProp} we have that the low-level controller returns a feasible control action $u_l(t)\in\mathcal{U}_l$ and therefore
\begin{equation}
    u(t) = u_l(t) + u_m(t) \in \mathcal{U}_l \oplus \mathcal{U}_m = \mathcal{U}, \forall t \in \Rp.
\end{equation}
The feasibility of the state-input sequences in~\eqref{eq:feasTr} for the MPC problem solved with $\Xgoal, \Ggoal, \Gcurr, \pgoal$ implies that
\begin{equation}\label{eq:feasSolCnstr}
    \begin{aligned}
    & x_{j|j}^{d,*} \in \mathcal{X}_{p,q}^k \ominus \mathcal{E} \\
    & x(jT) - x_{j|j}^{d,*} \in \mathcal{E},
    \end{aligned}
\end{equation}
$\forall j\in\{i,\ldots, i+N-1\}$. Consequently, from the above equation and definition~\eqref{eq:Xpq}, we have that 
\begin{equation*}
    \begin{aligned}
        & p(jT) \in \mathcal{X}_{p} \text{ and } q(jT) \in \mathcal{X}_{q}, \forall j\in\{i,\ldots, i+N-1\}. 
    \end{aligned}
\end{equation*}

Finally, we show that the state $x(t)$ of the augmented system~\eqref{eq:augSys} in closed-loop with Algorithm~\ref{algo:multiRate} converges to the goal cell $\Ggoal$ in finite time. 
We have shown that, if $p(jT) \notin \Ggoal$ for all $j\in\{i,\ldots,i+N-1\}$, then the MPC problem is feasible for all time $t_k=kT$ and $k \in \{i, \ldots,i+N-1\}$. Now we notice that by feasibility of the MPC problem at time $t_{i+N-1}=(i+N-1)T$ with $N_{i+N-1}=1$, we have that the optimal planned trajectory satisfies
\begin{equation}\label{eq:optPlannedLast}
    x_{i+N|i+N-1}^{d,*} \in \Xgoal \ominus \mathcal{E}_p,
\end{equation}
where $\mathcal{E}_p$ is defined as in~\eqref{eq:Ep}.

From equation~\eqref{eq:eqModelAcrossLayers} and Proposition~\ref{Proposition:llProp}, we have that 
\begin{equation*}
    x((i+N)T)-\bar x_{i+N|i+N-1}^{d,*} \!\!=\! x((i+N)T)-\bar x^-((i+N)T) \in \mathcal{E}.
\end{equation*}
The above equation together with~\eqref{eq:Ep} and~\eqref{eq:optPlannedLast} imply that at time $t_{i+N}=(i+N)T$
\begin{equation*}
\begin{aligned}
    x((i+N)T) = \begin{bmatrix} p((i+N)T) \\ q((i+N)T)\end{bmatrix}& \in \Xgoal \ominus \mathcal{E}_p \oplus \mathcal{E}
\end{aligned}
\end{equation*}
and therefore $p((i+N)T) \in \Ggoal$.

Concluding, if for all time $t_j=jT$ and $j \in \{i, \ldots, i+N-1\}$ we have that $p(jT) \notin \Ggoal$, then $p((i+N)T) \in \Ggoal$. Thus, the closed-loop system converges to the goal cell $\Ggoal$ in finite time.
\end{proof}

\vspace{0.2cm}
Finally, we leverage Theorem~\ref{th:safe} to show that, when $\Ggoal \subset \mathcal{X}_p$, the multi-rate control Algorithm~\ref{algo:multiRate} steers the system in finite time to goal cell $\Ggoal$ for all $k \in \Zp$ and, consequently, the closed-loop system maximizes the probability of satisfying the high-level specifications. In particular, we show that the contingency plan from lines~10--14 of Algorithm~\ref{algo:multiRate} guarantees feasibility of the planner when the time-varying components are updated.

\begin{theorem}\label{th:spec}
Let Assumptions~\ref{ass:QPfeasibility}-\ref{ass:freePath} hold and consider system~\eqref{eq:augSys} in closed-loop with Algorithm~\ref{algo:multiRate}. If $x(0) \in \Xcurr\oplus \mathcal{E}$, then Algorithm~\eqref{algo:multiRate} maximizes the probability that the closed-loop satisfies the high-level specifications.
\end{theorem}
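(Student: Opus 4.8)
The plan is to argue by induction on the high-level index $k$ that the closed loop visits, each in finite time and while satisfying the state and input constraints~\eqref{eq:lowLevelCnstr}, exactly the sequence of goal cells produced by the high-level policy $\pi_h$, and then to read the claim off from the optimality of $\pi_h$ in~\eqref{eq:quantitativeProblem}. For the base case, since $x(0)\in\Xcurr\oplus\mathcal{E}$ and $\Gcurr$ is the cell returned by Algorithm~\ref{algo:high-level_planning} at $k=0$, Assumption~\ref{ass:reachability} makes problem~\eqref{eq:ftocp} feasible at $t=0$ with horizon $N$ and the step-$0$ components, so Theorem~\ref{th:safe} gives constraint satisfaction up to the first mid-level instant $t_{i_1}$ with $p(t_{i_1})\in\Ggoal$, which triggers the high-level update in lines~2--7 of Algorithm~\ref{algo:multiRate}.

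For the inductive step, suppose the system has just entered the $k$-th goal cell at time $t_{i_{k+1}}$ with~\eqref{eq:lowLevelCnstr} satisfied so far. After the update, the new current cell is this just-entered cell, so its equilibrium set is precisely the quantity $\XgoalMinus$ that enters the contingency branch in line~11. I would split according to the horizon carried into $t_{i_{k+1}}$ by the $k$-th leg. If that horizon was one, the terminal constraint of the $k$-th-leg problem at $t_{i_{k+1}-1}$ forced its last planned state into $\XgoalMinus\ominus\mathcal{E}_p$, so by Proposition~\ref{Proposition:llProp}, the identity~\eqref{eq:eqModelAcrossLayers} and the reset map~\eqref{eq:returnMap} (the reasoning behind~\eqref{eq:discContConn}) one obtains $x(t_{i_{k+1}})\in\XgoalMinus\ominus\mathcal{E}_p\oplus\mathcal{E}\subseteq\XgoalMinus\oplus\mathcal{E}$, the inclusion because $0\in\mathcal{E}_p$; since $\XgoalMinus$ is the current equilibrium set of leg $k+1$, Assumption~\ref{ass:reachability} makes the step-$(k+1)$ primary problem feasible at $t_{i_{k+1}}$ with horizon $N$, and Theorem~\ref{th:safe} completes the $(k+1)$-st leg. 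Otherwise the horizon was at least two, the step-$(k+1)$ primary problem may be infeasible because the cell was entered mid-trajectory, and Algorithm~\ref{algo:multiRate} solves instead the contingency problem of line~11 with the old components $\XgoalMinus,\GcurrMinus,\GgoalMinus,\pgoalMinus$ and horizon $N^{k-1}_{i_{k+1}}$; this problem is feasible because the one-step shift of the $k$-th-leg optimal trajectory is admissible for it — the connection $x(t_{i_{k+1}})-\bar x^{d,*}_{i_{k+1}|i_{k+1}-1}\in\mathcal{E}$ holds by Proposition~\ref{Proposition:llProp} and~\eqref{eq:eqModelAcrossLayers}, the position constraint holds since $p(t_{i_{k+1}})\in\GgoalMinus\subseteq\GcurrMinus\cup\GgoalMinus$, and the horizon matches because the counter update in line~16 together with the re-indexing in line~6 makes $N^{k-1}_{i_{k+1}}$ equal to the $k$-th-leg horizon at $t_{i_{k+1}-1}$ minus one. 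Along consecutive contingency steps the same shift argument yields recursive feasibility while $N^{k-1}_i$ is decremented in line~12, the constraints~\eqref{eq:lowLevelCnstr} hold throughout (each goal cell is in $\mathcal{X}_p$ by Assumption~\ref{ass:freePath}), and when $N^{k-1}_i$ reaches one the contingency terminal constraint deposits the state, at the following mid-level instant, into $\XgoalMinus\ominus\mathcal{E}_p\oplus\mathcal{E}\subseteq\XgoalMinus\oplus\mathcal{E}$, reviving the step-$(k+1)$ primary problem; since the inherited horizon is at most $N$, this contingency phase lasts at most $N$ steps and is then followed by Theorem~\ref{th:safe}, which drives the system into the $(k+1)$-st goal cell in finite time with the constraints held. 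This closes the induction.

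It then follows that the grid cell occupied by $p(t)$ at the successive high-level update instants is exactly the MOMDP state trajectory $\boldsymbol{\omega}$ generated by $\pi_h$ acting on $(s_k,b_k)$, because every planned cell is actually reached and the belief recursion~\eqref{eq:beliefUpdate} together with the action choice of Algorithm~\ref{algo:high-level_planning} are those of the MOMDP. Since $\pi_h$ solves~\eqref{eq:quantitativeProblem}, i.e., it is the optimal policy returned by the point-based synthesis of~\cite{rosolia2021time}, the closed loop attains $\max_{\pi}\mathbb{P}^{\pi}[\boldsymbol{\omega}\models\psi]$, which is the assertion.

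The step I expect to be the main obstacle is the inductive step across a high-level update: establishing recursive feasibility of the contingency problem with the inherited, shrinking horizon, and verifying that a horizon-one contingency step delivers the state into the current-leg equilibrium set inflated by $\mathcal{E}$, so that Assumption~\ref{ass:reachability} can re-engage the primary problem. This is where the bookkeeping of the counters $N^k_i$ and $N^{k-1}_i$ and the Pontryagin/Minkowski inclusions must be handled carefully; by contrast, the identification of the closed-loop cell trajectory with the MOMDP trajectory under $\pi_h$ is routine once the finite-time reaching property is in place.
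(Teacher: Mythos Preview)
Your proposal is correct and follows essentially the same approach as the paper: an induction over the high-level updates, with the base case via Assumption~\ref{ass:reachability} and Theorem~\ref{th:safe}, and the inductive step handled by the contingency recursion until the shrinking horizon reaches one, at which point the terminal constraint together with Proposition~\ref{Proposition:llProp} deposits the state in $\XgoalMinus\oplus\mathcal{E}=\XcurrPlus\oplus\mathcal{E}$ so that Assumption~\ref{ass:reachability} re-engages the primary problem. The only organizational difference is that the paper splits the inductive step by whether the primary problem is feasible after the update (its Case~1) versus infeasible with contingency horizon equal to one (Case~2) or greater than one (Case~3), whereas you split first by the horizon carried into the update; in your second branch you write that the primary problem ``may be infeasible'' and proceed directly to the contingency argument, which leaves the sub-case ``entry horizon $\geq 2$ and primary feasible'' implicit---but that sub-case is immediate from Theorem~\ref{th:safe}, so this is a matter of presentation rather than a gap.
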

\begin{proof}
The proof follows by induction. Assume that at time $t_{i}=iT$ the closed-loop system reaches the goal cell $\Ggoal$, i.e., $p(iT)\in\Ggoal$. Then, at time $t_{i}=iT$ we have that the high-level decision maker from Algorithm~\ref{algo:multiRate} (lines~2--9) updates the high-level time and the time-varying components $\XgoalPlus, \GcurrPlus, \GgoalPlus, \pgoalPlus$ used to design the MPC problem~\eqref{eq:ftocp}. After the high-level update, the MPC problem with $N=N^{k+1}_j$, $\XgoalPlus, \GcurrPlus, \GgoalPlus, $ and $\pgoalPlus$ may be either feasible or unfeasible\footnote{Unfeasiblity may be caused by the update of $\GgoalPlus$, $\GcurrPlus$ and $\XcurrPlus$.}. Thus, we analyse the following three cases for $j\geq i$:

\textit{Case 1: } The MPC problem with $\GgoalPlus, \GcurrPlus, \pgoalPlus$, and $N=N^{k+1}_j$ is feasible, therefore from Theorem~\ref{th:safe} we have that Algorithm~\ref{algo:multiRate} steers the nonlinear system to the goal $\GgoalPlus$.

\textit{Case 2:}
The MPC problem with $\GgoalPlus, \GcurrPlus, \pgoalPlus$, and $N=N^{k+1}_j$ is not feasible and $N_j^k = 1$. Then from Theorem~\ref{th:safe}, we have that the contingency MPC with $N_j^k$, $\Ggoal, \Gcurr$ and $\pgoal$ is feasible and Algorithm~\ref{algo:high-level_planning} returns a feasible control action. Furthermore, as $N_j^k = 1$ the terminal state of the optimal predicted trajectory is
\begin{equation*}
    x_{j+1|j}^{d,*} \in \Xgoal \ominus \mathcal{E}_p.
\end{equation*}
The above equation together with equation~\eqref{eq:discContConn} imply that
\begin{equation*}
    x((j+1)T) \in \Xgoal \ominus \mathcal{E}_p \oplus \mathcal{E} \subset \Xgoal \oplus \mathcal{E}.
\end{equation*}
Notice that $\Xgoal = \XcurrPlus$, thus from Assumption~\ref{ass:reachability} we have that at the next time step  $t_{j+1}=(j+1)T$ the MPC problem with $N^{k+1}_{j+1}=N$, $\XgoalPlus$, $\GgoalPlus$, $\GcurrPlus$, and $\pgoalPlus$ is feasible and, from Theorem~\ref{th:safe}, we have that Algorithm~\ref{algo:multiRate} steers the nonlinear system to the goal $\GgoalPlus$ in finite time.

\textit{Case 3:}
The MPC problem with $\GgoalPlus, \GcurrPlus, \pgoalPlus$, and $N=N^{k+1}_j$ is not feasible and $N_j^k > 1$. Then from Theorem~\ref{th:safe}, we have that the contingency MPC with $N_j^k$, $\Ggoal, \Gcurr$ and $\pgoal$ is feasible (lines~10--13 in Algorithm~\ref{algo:multiRate}). 

By assumption $x(0)  \in \Xcurr \oplus \mathcal{E}$, thus from Assumption~\ref{ass:reachability} and Theorem~\ref{th:safe} we have that at time $t=0$ the MPC is feasible and that Algorithm~\ref{algo:multiRate} steers system~\eqref{eq:sysModel} to $\mathcal{G}^0_{\mathrm{goal}}$. 
This fact implies that the conditions form one the above Cases~1--3 are met, if $x(0)  \in \Xcurr \oplus \mathcal{E}$. Now notice that at each time step $N_{j+1}^k = N_{j}^k - 1$ (line 16), thus  Case~3 occurs at most $N$ times. Therefore, after at most $N$ time steps the conditions from either Case~1 or Case~2 are met and Algorithm~\ref{algo:multiRate} will steer the system to the grid cell $\mathcal{C}_{\textrm{goal}}^{k+1}$ computed by the high-level policy~\eqref{eq:quantitativeProblem}.
Consequently, as the high-level policy~\eqref{eq:quantitativeProblem} maximizes the probability of satisfying the specifications, we have that the closed-loop system maximizes the probability of satisfying the specifications.
\end{proof}

\begin{figure}[b!]
    \centering
	\includegraphics[trim= 3mm 3mm 3mm 3mm, clip, width= 0.99\columnwidth]{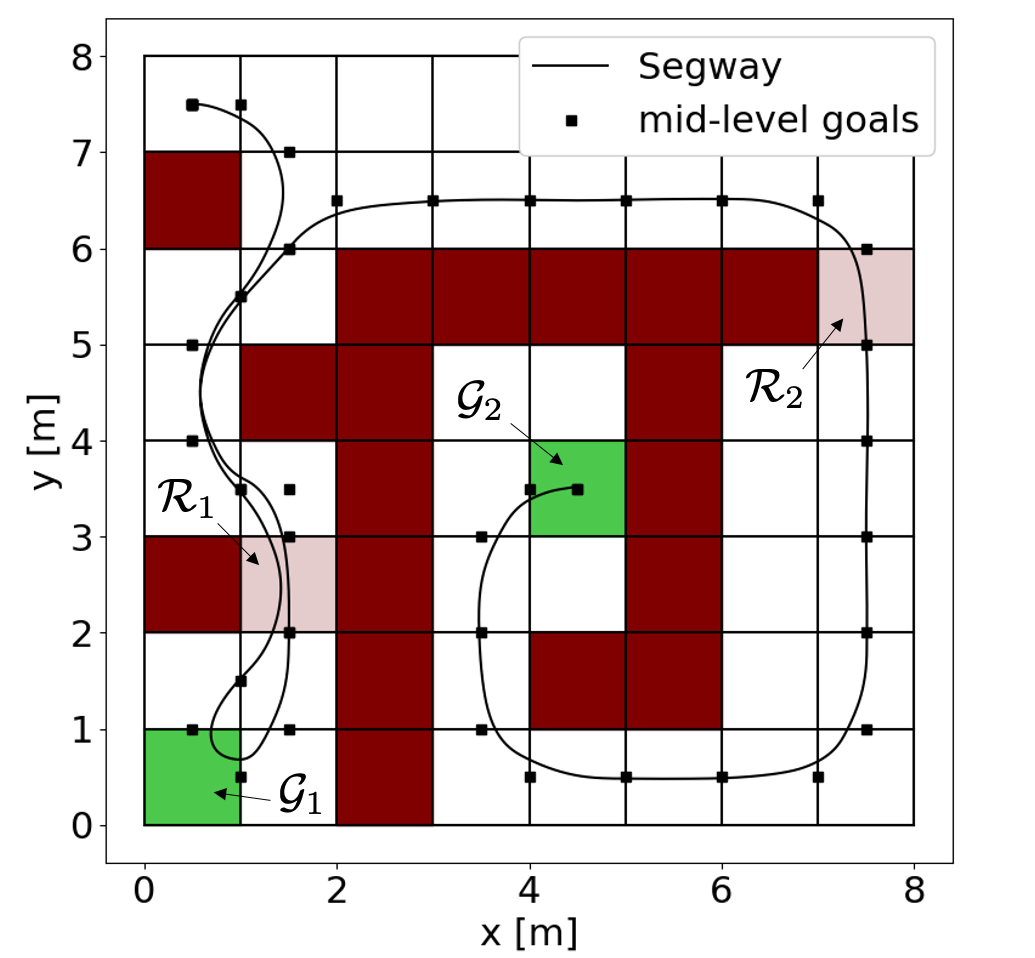}
    \caption{Closed-loop trajectory. The Segway first explores regions $\mathcal{R}_1$, which is traversable, and $\mathcal{G}_1$ that does not contain the science sample. Afterwards, it explores the traversable region $\mathcal{R}_2$ and it reaches $\mathcal{G}_2$.}
    \label{fig:clSim}
\end{figure}

\section{Results}
We tested the proposed strategy in simulation and experiment on navigation tasks inspired by the Mars exploration mission~\cite{haesaert2018temporal,haesaert2019temporal,nilsson2018toward}. 
We control a Segway-like robot and our goal is to explore the environment to find science samples which may be located in the known goal regions $\mathcal{G}_i$ shown in Figure~\ref{fig:clSim}. 
The specification $\psi = \neg \texttt{collision} U (( \texttt{Goal}_1 \land \texttt{sample}_1) \lor ( \texttt{Goal}_2 \land \texttt{sample}_2) )$, where the atomic proposition $\texttt{sample}_i$ is satisfied if the region $\mathcal{G}_i$ contains a science sample and the atomic proposition $\texttt{Goal}_i$ is satisfied if the Segway is in a goal cell $\mathcal{G}_i$. The high-level control policy associated with specification $\psi$ is computed solving a reach-avoid problem for the product MOMDP,  which is computed preforming the cross-product between an automata associated the specification $\psi$ and the original MOMDP\footnote{The computational complexity of solving the high-level synthesis problem is a function of the dimension of the product MOMDP, which may grow exponentially for complex specifications. The analysis of the computational tractability of the high-level synthesis process is beyond the scope of this work and the code used to synthesize the high-level policy can be found at \texttt{\url{https://github.com/urosolia/MOMDP}}.}. For further details on how to convert a specification into a finite state automata and the computation of the product, please refer to~\cite{belta2017formal}.
While performing the search task, we have to collect observations to determine the state of the uncertain region $\mathcal{R}_i$, which may be traversable with some probability.  The controller has access only to partial observations about the environment.
In particular, the Segway receives a perfect observation about the state of the uncertain region $\mathcal{R}_i$ when one cell away, an observation which is correct with probability $0.8$, when the Manhattan distance is smaller than two, and an uninformative observations otherwise. Similarly, the Segway receives a partial noisy observation about the goal region $\mathcal{G}_i$ which is correct with probability $0.7$, when one cell away and a perfect observations when the goal cell $\mathcal{G}_i$ is reached.

\begin{figure}[t!]
    \centering
	\includegraphics[trim= 3mm 3mm 3mm 3mm, clip, width= 1.0\columnwidth]{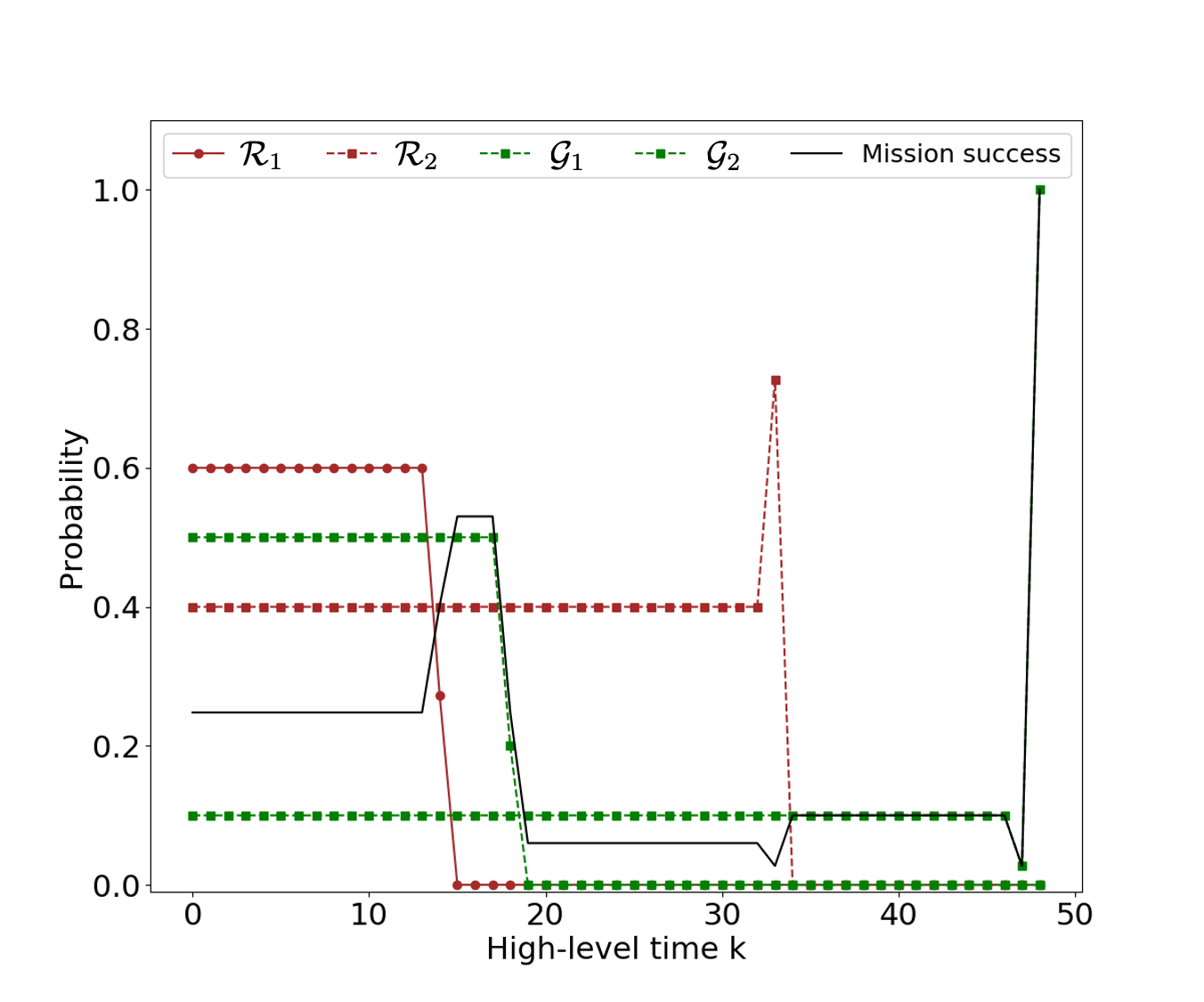}
    \caption{This figure shows the closed-loop probability of mission success, which equals the probability of satisfying the high-level specifications. Furthermore, we reported also the belief about regions $\mathcal{R}_1$ and $\mathcal{R}_2$ being traversable (brown) and the goal regions $\mathcal{G}_1$ and $\mathcal{G}_2$ containing the science sample (green).}
    \label{fig:prob}
\end{figure}

\begin{figure}[t!]
    \centering
	\includegraphics[trim= 3mm 3mm 3mm 3mm, clip, width= 1.0\columnwidth]{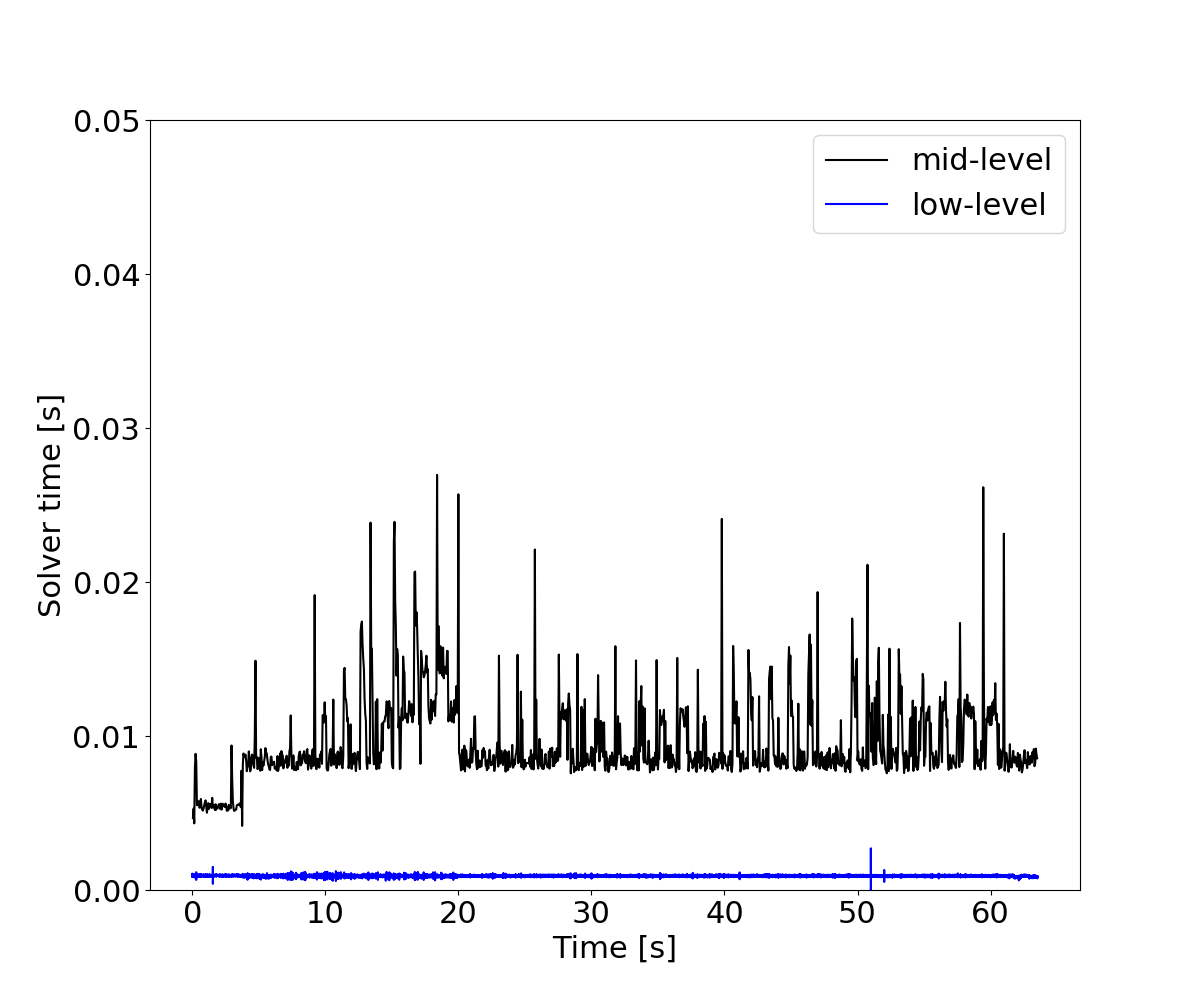}
    \caption{Computational time associated with the mid and low layers. It takes on average $12$ ms to compute mid-level control actions and less than $1$ ms to compute low-level commands. In this example, the mid-level is discretized at $20$ Hz and the low-level at $1$ kHz.}
    \label{fig:comp}
\end{figure}

 The state of the Segway is defined as follows:
 \begin{equation*}
     x= [X, Y, \theta, v, \dot \theta, \psi, \dot \psi],
 \end{equation*}
 where $(X-Y)$ represents the position of the center of mass, $(\theta, \dot \theta)$ the heading angle and yaw rate, $v$ the velocity, and $(\psi, \dot \psi)$ the rod's angle and angular velocity. The control input $u = [T_l, T_r]$, where $T_l$ and $T_r$ are the torques to the left and right wheel motors, respectively. In order to implement the low-level CLF-CBF QP we used the following function:
\begin{equation}\label{eq:barrier}
    h(e) = 1-|| \text{diag}(v_h) (x-\bar x) ||_2^2,
\end{equation}
where $v_h = [ 1/0.02, 1/0.02, 1/0.1, 1/0.1, 1/0.3, 1/0.1, 1/0.3]$ and  $\bar x = [\bar X, \bar Y, \bar \theta, \bar v, \dot{\bar \theta}, \bar \psi, \dot{\bar \psi}]$ represents the state of the nominal system from~\eqref{eq:referenceModel}. The candidate control Lyapunov function is
\begin{equation*}
    V(e) = || \text{diag}(v_v) (x-\bar x) ||_2^2,
\end{equation*}
where  $v_v = [ 100,100,100,100,10000,10000,100]$ and in the CLF-QBF QP~\eqref{eq:CLF-CBF-QP} we used $c_1=1$, $c_2=10$ and $\alpha(x)=x$.
The planning model~\eqref{eq:referenceModel} is computed iteratively linearizing the Segway dynamics around the predicted MPC trajectory. This strategy is standard in MPC, for more details on the linearization strategy please refer to~\cite{rosolia2019learning}.  The stage cost 
\begin{equation*}
\ell(x,u)=x^\top Qu+u^\top Ru,
\end{equation*}
for the tuning matrices $Q=\text{dial}(0.1, 0.1, 0, 0, 10, 1, 10)$, $R=\text{diag}(0.01, 0.01)$, and $Q_f = \text{diag}(100, 100)$. Furthermore, we added an input rate cost with penalty $Q_{\text{rate}}=0.1$ and a slack variable for the terminal constraint on the state $q_{t+N|i}$ with weight $Q_{\text{slack}}=\text{diag}(100, 100, 100)$. Finally, we approximated $\mathcal{S}= \{e=x-\bar x \in\mathbb{R}^{n_x}: h(e) \geq 0\}= \{e = x-\bar x \in\mathbb{R}^{n_x}: || \text{diag}(v_v) (x-\bar x) ||_2^2 \leq 1\} $ with $\bar{\mathcal{S}}= \{e = x-\bar \in\mathbb{R}^{n_x}: || \text{diag}(v_v) (x-\bar x) ||_\infty \leq 1\} $. This strategy allows us to write the MPC problem~\eqref{eq:ftocp}
as a QP\footnote{Note that using $\mathcal{S}$ renders the MPC problem an SOCP, which is convex but computationally more demanding.}, which we solved using OSQP~\cite{stellato2020osqp, banjac2017embedded}.

\begin{figure}[t!]
    \centering
	\includegraphics[trim= 3mm 3mm 3mm 3mm, clip, width= 1.0\columnwidth]{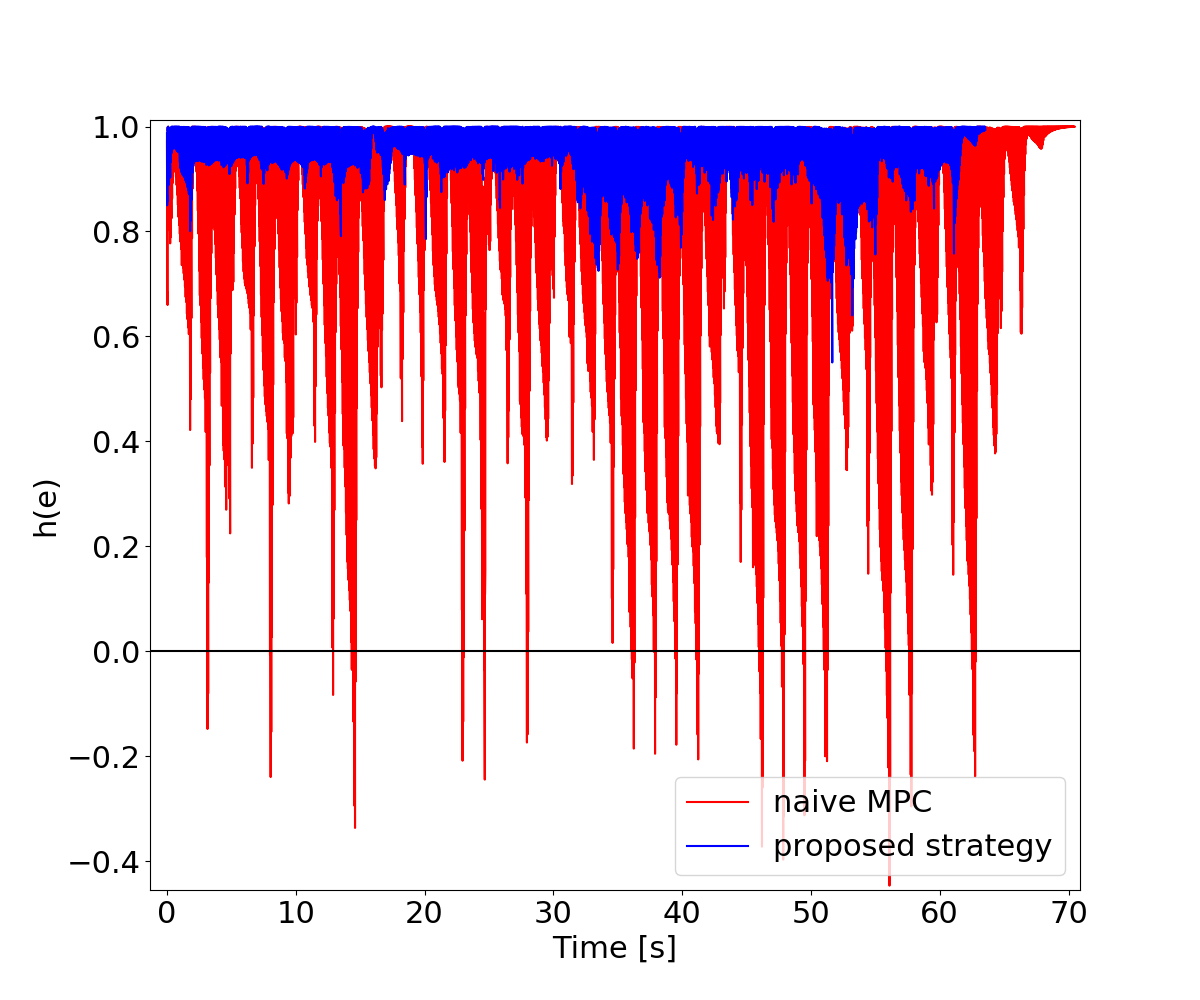}
    \caption{Comparison between the barrier function associated with the proposed strategy and a naive MPC that is based on the linearized dynamics and it is synthesized without robustifying the constraints. As shown in the figure, when the low-level controller is not used the difference between the planner trajectory and the MPC trajectory grows and, as a results, the barrier function~\eqref{eq:barrier} becomes negative. }
    \label{fig:h}
\end{figure}

\subsection{Simulation}

We implemented the proposed strategy in our high-fidelity Robotic Operating System (ROS) simulator. Figure~\ref{fig:clSim} shows the locations of the uncertain and goal regions. The code can be found at \texttt{\url{https://github.com/DrewSingletary/segway_sim}}, please check the \texttt{REAME.md} to replicate our results. In this example the goal regions $\mathcal{G}_1$ and $\mathcal{G}_2$ may contain a science sample with probability $0.6$ and $0.4$, respectively. Whereas, regions $\mathcal{R}_1$ and $\mathcal{R}_2$ may be traversable with probability $0.5$ and $0.1$, as shown in Figure~\ref{fig:prob}.

Figure~\ref{fig:clSim} shows the closed-loop trajectory of the Segway. We notice that the controller explores the uncertain region $\mathcal{R}_1$, which in this example is traversable and afterwards it reaches the goal regions $\mathcal{G}_1$. As shown in Figure~\ref{fig:prob}, at the high-level time $k=19$ the controller figures out that the goal cell $\mathcal{G}_1$ does not contain a science sample and, consequently, the probability of mission success drops. Afterwards, the controller steers the Segway to the traversable region $\mathcal{R}_2$ and to the goal regions $\mathcal{G}_2$. In this example, the goal regions $\mathcal{G}_2$ contains a science sample and the mission is completed successfully, as shown in Figure~\ref{fig:prob}. 

The mid-level is discretized for $T = 50$ ms and the low-level at $1$ kHz. Figure~\ref{fig:comp} shows the computational time associated with mid-level and low-level control actions. It takes on average $12$ ms and at most about $30$ ms to compute the mid-level control action $u_m(t)$--thus the mid-level planner runs in real-time. Furthermore, we notice that it takes less than $1$ ms to compute the low-level action $u_l(t)$. 

Finally, we analyze the evolution of the barrier function~\eqref{eq:barrier}, which measures the difference between the trajectory $x(t)$ of system~\eqref{eq:sysModel} and the reference trajectory $\bar x(t)$ associated with nominal model~\eqref{eq:referenceModel}, which is computed by the mid-level planner. We compared the proposed strategy with a naive MPC which is synthesized as in~\eqref{eq:ftocp}, but without taking into account the effect of the tracking error, i.e., we do not tighten the constraints and we set $x_{i|i}=x(t)$. Figure~\ref{fig:h} shows the evolution of the barrier function for the proposed strategy and the naive MPC. We notice that when the low-level controller is not used, the barrier function becomes negative and in general has a lower magnitude. Therefore, this figure shows the advantage of the proposed hierarchical control architecture, where the low-level high-frequency controller is leveraged to track the reference trajectory. Indeed, this high-frequency feedback is used to modify the mid-level control actions, as shown in Figure~\ref{fig:inputs}. The mid-level control action is updated at $20$ Hz and the low-level input at $1$ kHz. Notice that after the update of the mid-level input, the contribution of the low-level input towards the total control action is limited. However, as time progresses the accuracy of the linearization used to plan the reference trajectory decreases and the effort required by low-level controller to track the trajectory computed by the mid-level planner increases.

\begin{figure}[t!]
    \centering
	\includegraphics[trim= 3mm 3mm 3mm 3mm, clip, width= 1.0\columnwidth]{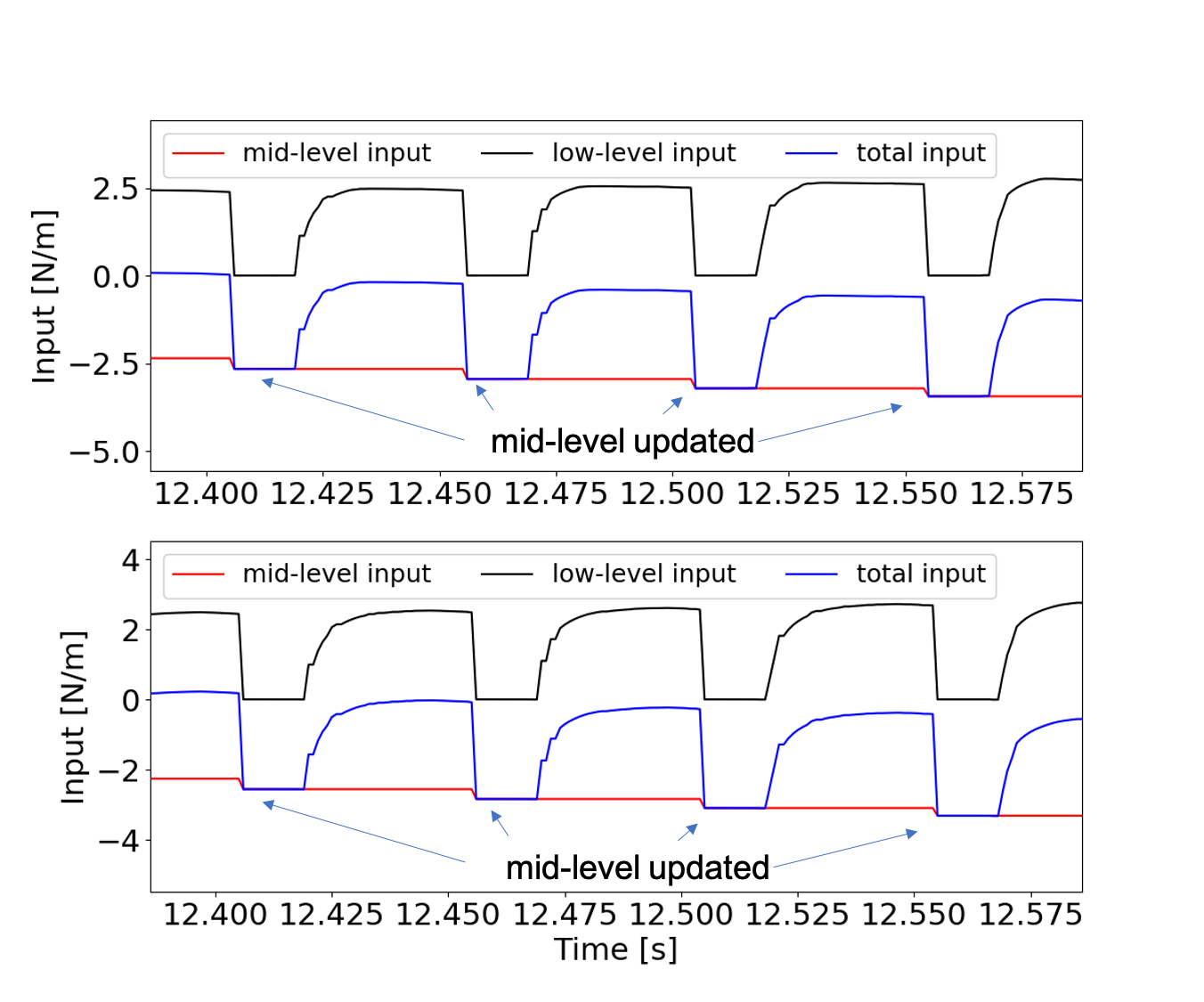}
    \caption{Input torque sent to the right (top) and left (bottom) motor over a period of $0.2$ second. The mid-level input is updated at $20$ Hz, whereas the low-level action is updated at $1$ kHz.}
    \label{fig:inputs}
\end{figure}

\begin{figure}[t!]
    \centering
	\includegraphics[trim= 3mm 3mm 3mm 3mm, clip, width= 1.0\columnwidth]{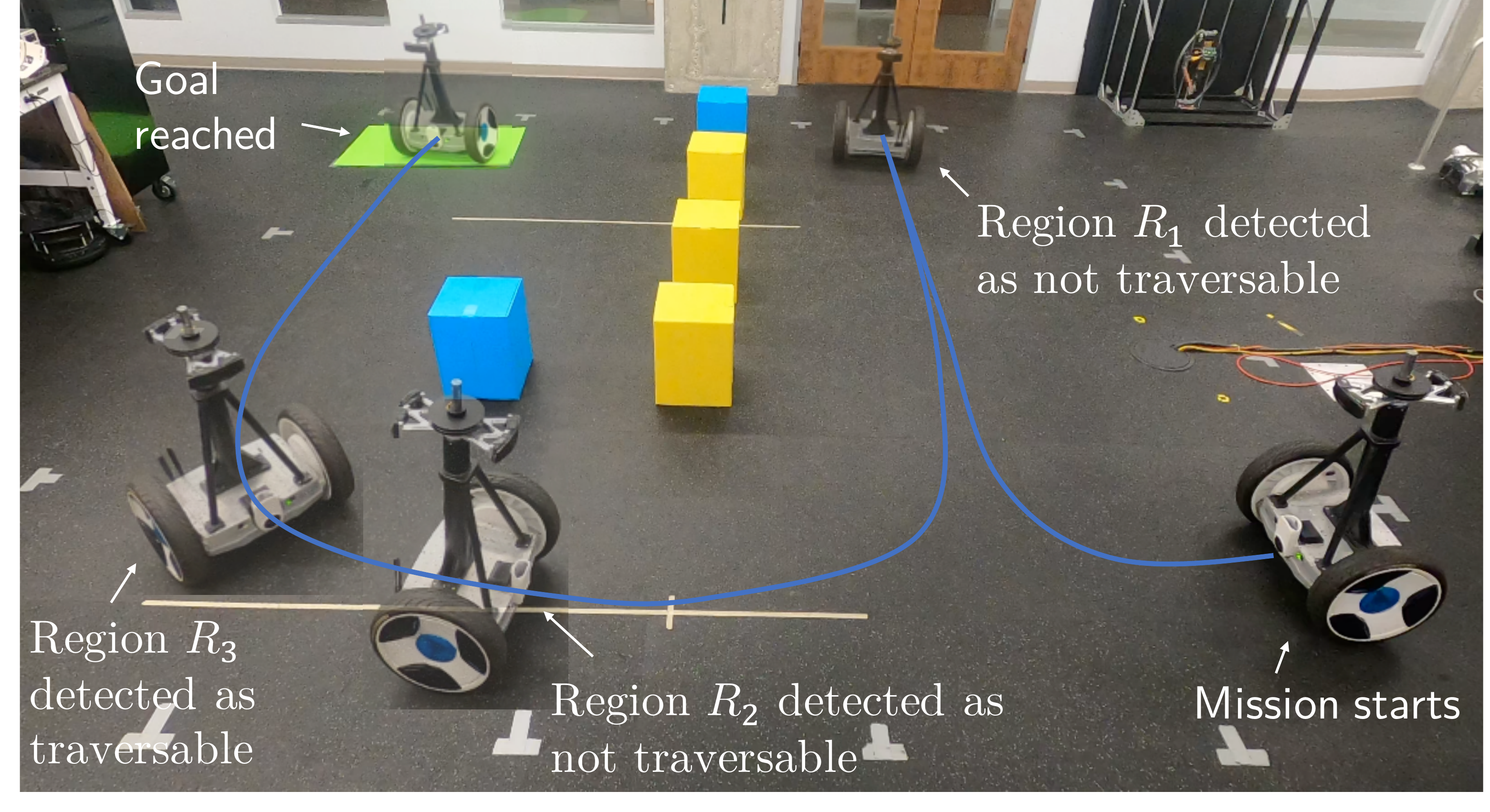}
    \caption{Closed-loop trajectory during the experiment. The Segway first explores the uncertain regions $\mathcal{R}_1$, $\mathcal{R}_2$, and $\mathcal{R}_3$ and afterwards it reaches the goal region.}
    \label{fig:clExp}
\end{figure}

\begin{figure}[t!]
    \centering
	\includegraphics[trim= 3mm 3mm 3mm 3mm, clip, width= 1.0\columnwidth]{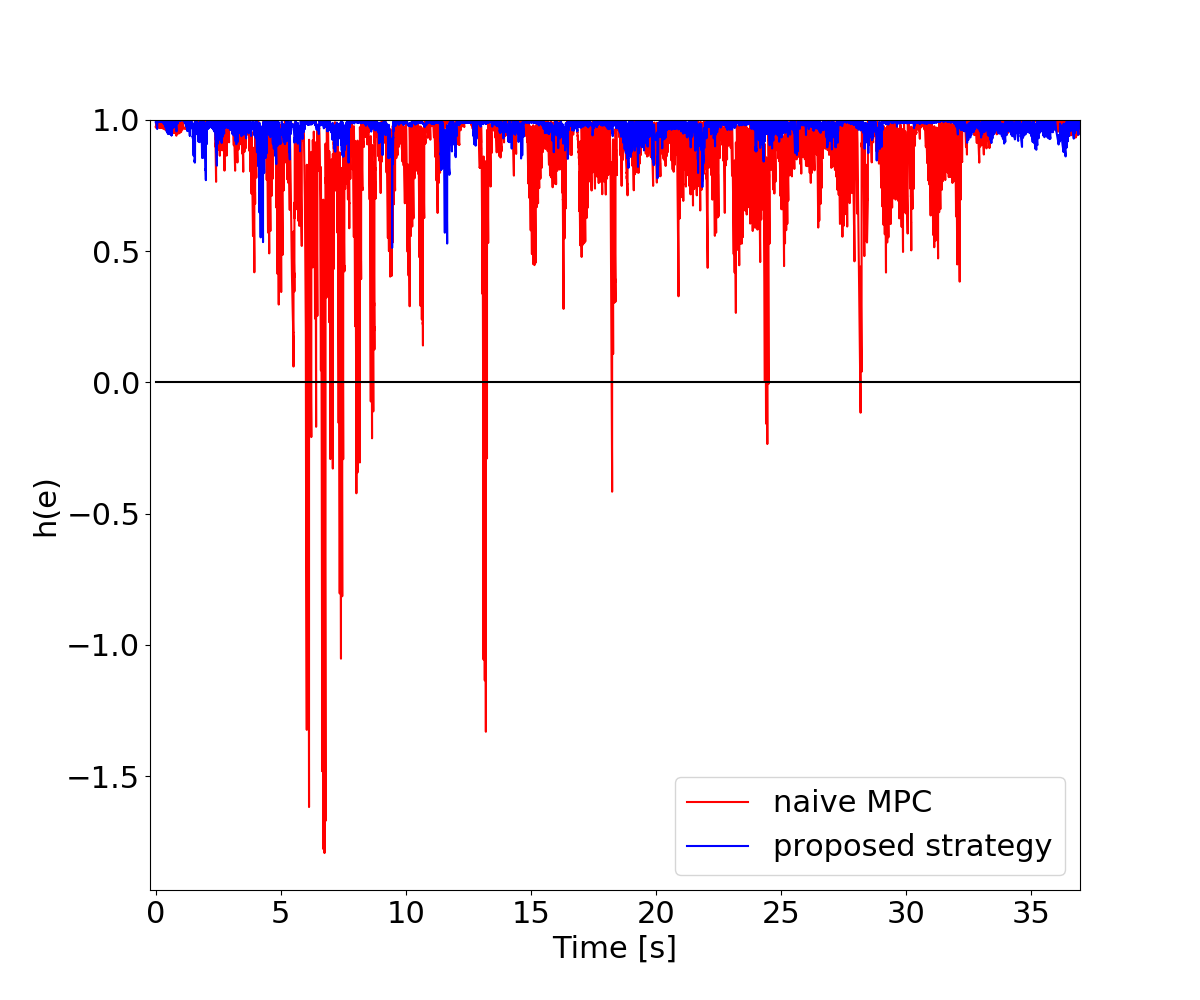}
    \caption{Experimental comparison between the barrier function associated with the proposed strategy and a naive MPC which is based on the linearized dynamics. Also in this case, when the low-level controller is not used, the difference between the planner trajectory and the MPC trajectory grows and, as a results, the barrier function defined in~\eqref{eq:barrier} becomes negative. }
    \label{fig:hExp}
\end{figure}

\subsection{Experiment}
We implemented the proposed multi-rate hierarchical control strategy on the Segway-like robot shown in Figure~\ref{fig:envScheme}.
State estimation is based on wheel encoders and IMU data from a VectorNav VN-100. The state estimate and the low-level control action $u_l$ are computed at $800$ Hz on the Segway, which is equipped with an ARM Cortex-A57 (quad-core) @ 2 GHz CPU running the ERIKA3 RTOS. On the other hand, the mid-level planner discretized at $20$ Hz and the high-level decision maker run on a desktop with an Intel Core i7-8700 CPU (6-cores) @ 3.7 GHz CPU, which sends the reference trajectory $\bar x$ and the reference input $u_m$ via WiFi.

Figure~\ref{fig:envScheme} shows the location of the three uncertain regions $\mathcal{R}_1$, $\mathcal{R}_2$ and $\mathcal{R}_3$ which may be traversable with probability $0.9$, $0.3$ and $0.2$, respectively. In this example, we assume that the goal region $\mathcal{G}_1$ contains the science sample with probability 1. Figure~\ref{fig:clExp} shows the closed-loop trajectory. First, the controller explores region $\mathcal{R}_1$ which is not traversable, and afterwards it steers the Segway towards regions $\mathcal{R}_2$ and $\mathcal{R}_3$. After collecting observations about the environment, the controller detects that region $\mathcal{R}_2$ is not traversable and that region $\mathcal{R}_3$ is free space that the Segway can navigate through to reach the goal region $\mathcal{G}_1$. A video of the experiment and comparison with a naive MPC can be found at \texttt{\url{https://www.youtube.com/watch?v=Q-Mm0ywPh_I}}.

Figure~\ref{fig:hExp} shows the evolution of the control barrier function~\eqref{eq:barrier}. We compare the proposed strategy with a naive MPC which is designed as in~\eqref{eq:ftocp}, but without robustifying the constraint sets and setting $x_{i|i}= x(t)$. Also in this case, when the high-frequency low-level controller is not active, the barrier function becomes negative meaning that the error $e$ does not belong to the safe set $\mathcal{E}$, i.e., $e(t) \notin \mathcal{E}$ for all $t \in \Rp$. This result highlights the importance of the low-level high-frequency feedback from the CLF-CBF QP, which compensates for the model mismatch at the planning layer. Indeed, the MPC planner uses a linearized and discretized model, which is a first order approximation of the true dynamics. This approximation is accurate only at the discrete time instances when the MPC input is computed. To compensate for this model inaccuracy, the low-level CLF-CBF QP tracking controller computes the high-frequency component $u_l(t)$ that is added to the mid-level piecewise constant input $u_m(t)$, as shown in Figure~\ref{fig:inputsExp}.

\begin{figure}[t!]
    \centering
	\includegraphics[trim= 3mm 3mm 3mm 3mm, clip, width= 1.0\columnwidth]{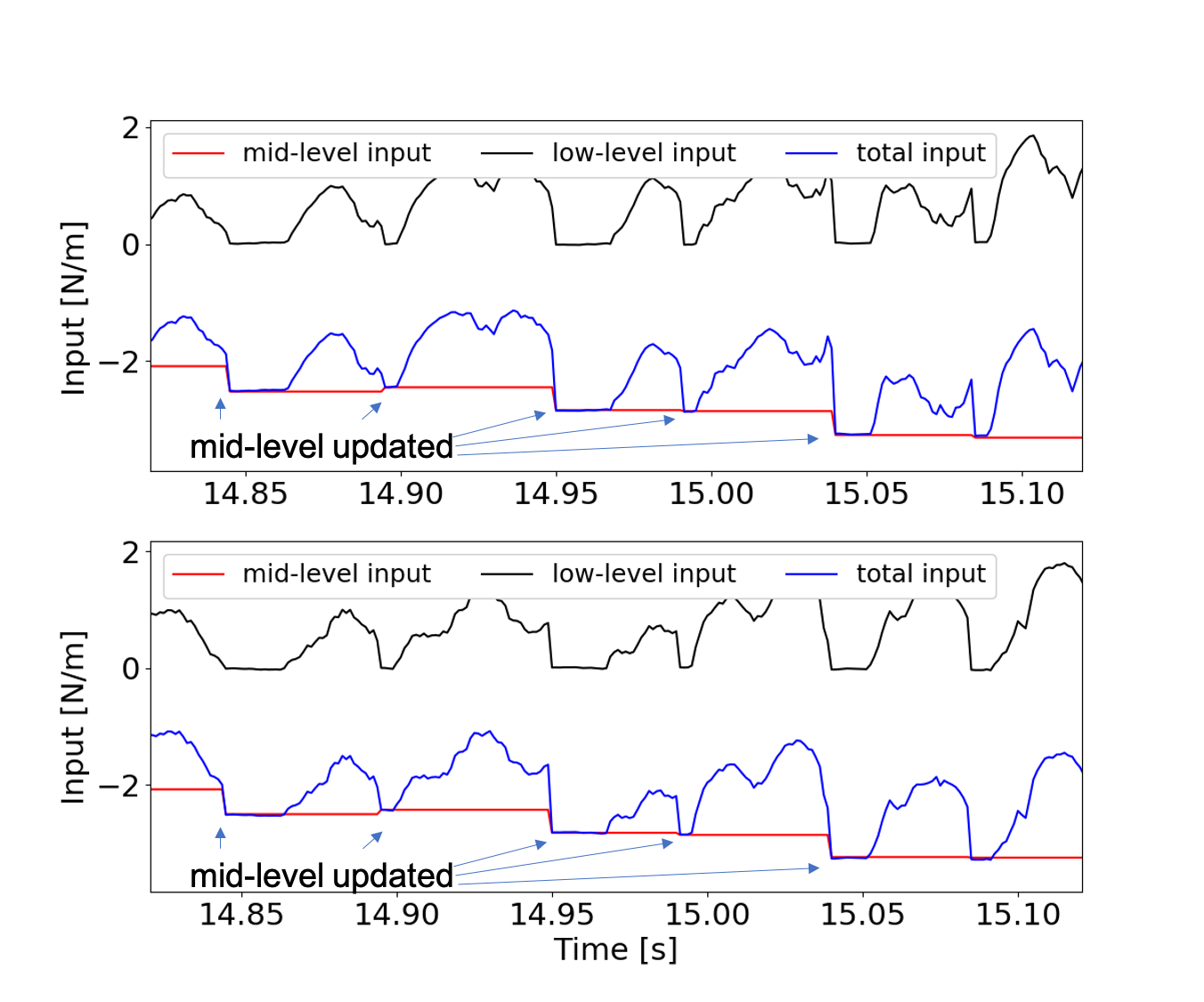}
    \caption{Experimental results. Input torque sent to the right (top) and left (bottom) motor over a period of $0.3$ seconds. The mid-level input is updated at $20$ Hz, whereas the low-level action is updated at $800$ Hz. Notice that the total input is the summation of the low-level and mid-level control inputs.}
    \label{fig:inputsExp}
\end{figure}

\balance 
\section{Conclusions}
In this paper we presented a multi-rate hierarchical control architecture for navigation tasks in partially observable environments. At the lowest level we leverage a CLF-CBF QP, which is used to track a reference trajectory within some error bounds. The reference trajectory is computed by a mid-level planner which leverages an MPC with time-varying terminal components. The feasibility of the MPC planner is guaranteed via a contingency scheme and a local reachability assumption on the planning model. Finally, at the highest level of abstraction, we showed how to model the system-environment interaction using a MOMDP and we proposed an algorithm to update the MPC time-varying components. 
The effectiveness of the proposed strategy is shown on navigation examples, where a Segway-like robot has to find science samples, while avoiding partially observable obstacles.

\section{Acknowledgements}
The authors would like to thank Geoffroy le Courtois du Manoir for helping with experiments and anonymous reviewers for constructive suggestions. 

\bibliographystyle{IEEEtran}
\bibliography{mybib}

\end{document}